%
\documentclass[runningheads]{llncs}
\usepackage[T1]{fontenc}
%
\usepackage{graphicx}
%
%

\usepackage[nothing]{algorithm}
\usepackage{algorithmicx}
\usepackage{setspace}
\usepackage[noend]{algpseudocode}
\usepackage{verbatim}
\usepackage{amsmath}
\usepackage{cleveref}
 \usepackage{amssymb}

\usepackage{enumerate}

\begin{document}
\title{Fully Dynamic $k$-Center in Low Dimensions via Approximate Furthest Neighbors}
%
%
\author{Jinxiang Gan \and
Mordecai Jay Golin}
\authorrunning{J.Gan and M.J.Golin}
%
\institute{Hong Kong University of Science and Technology \\ \email{\{jganad,golin\}@cse.ust.hk}}
\maketitle              
\begin{abstract}
Let $P$  be a set of points in  some metric space.
  The approximate furthest neighbor problem is, given a second point set $C,$ to find a point $p \in P$ that is a $(1+\epsilon)$ approximate furthest neighbor from $C.$
  
  The dynamic version  is to maintain $P,$ over insertions and deletions of points, in a way that permits efficiently solving the approximate furthest neighbor problem for the current $P.$ 
  
  We provide the first algorithm for  solving this problem  in metric spaces with finite doubling dimension. Our algorithm is built on top of the navigating net data-structure.

An immediate  application is two   new algorithms for solving the dynamic $k$-center problem.  The first dynamically maintains  $(2+\epsilon)$ approximate  $k$-centers in general metric spaces with bounded doubling dimension and  the second maintains 
    $(1+\epsilon)$ approximate  Euclidean $k$-centers. Both these dynamic algorithms work by starting with a known corresponding static algorithm for solving approximate $k$-center,  and replacing the static exact furthest neighbor subroutine used by that algorithm  with  our new dynamic approximate furthest neighbor one.

 Unlike previous algorithms for dynamic $k$-center
    with those same approximation ratios, our new ones do not require knowing $k$ or $\epsilon$ in advance. In the Euclidean case, our algorithm also seems to be the first deterministic solution.

\keywords{PTAS, Dynamic Algorithms, $k$-center, Furthest Neighbor.}
\end{abstract}

\section{Introduction}
    The main technical result of this paper is an efficient procedure for calculating approximate furthest neighbors from a dynamically changing point set $P.$ This procedure, in turn,  will lead to the development of two new simple algorithms for maintaining approximate $k$-centers in dynamically changing point sets.

    Let $B(c,r)$ denote the ball centered at $c$ with radius $r$. The $k$-center problem is to find a minimum radius $r^*$ and associated $C$ such that the union of balls $\bigcup_{c \in C} B(c,r^*)$ contains all of the points in $P.$
     
    In the arbitrary metric space version  of the problem, the centers are restricted to be points in $P.$  In the {\em Euclidean} $k$-center problem  $(\mathcal{X},d)=\left(\mathbb{R}^D,\ell_2\right)$ with $D \ge 1$ and $C$ may be any  set of $k$ points in $\mathbb{R}^D.$ The Euclidean $1$-center problem  is also known as the minimum enclosing ball (MEB) problem. 
          
    An $\rho$-approximation algorithm would find a set of centers $C',$ $|C'|\le k,$ and radius $r'$ in polynomial time such that
    $\bigcup_{c \in C'} B(c,r')$ contains all of the points in $P$ and $r' \le \rho r^*$. The $k$-center problem is known to be NP-hard to approximate with a factor smaller than 2 for  arbitrary metric spaces\cite{hsu1979easy}, and with a factor smaller than $\sqrt{3}$ for  Euclidean spaces \cite{feder1988optimal}.

    \paragraph*{Static algorithms}
    There do exist two 2-approximation algorithms in \cite{gonzalez1985clustering,hochbaum1985best} for the $k$ center problem on an arbitrary metric space; the best-known approximation factor for Euclidean $k$-center remains 2 even for two-dimensional space when $k$ is part of the input (see \cite{feder1988optimal}). There are better results for  the special case of the Euclidean $k$-center for fixed $k$,  $k=1$ or $2$ (e.g., see \cite{badoiu2002approximate,badoiu2003smaller,agarwal2010streaming,kim2015improved,kim1}). There are also PTASs \cite{badoiu2002approximate,badoiu2003smaller,kim1} 
    for the Euclidean $k$ center when $k$ and $D$ are  constants.

    \paragraph*{Dynamic algorithms}

    In many practical applications, the data set $P$ is not static but changes {\em dynamically} over time, e.g, a new point may be inserted to or deleted from $P$ at each step. 
    $C$ and $r$  then need to be recomputed at selected query times.   If only insertions are permitted, the problem is {\em incremental};
    if both insertions and deletions are permitted, the problem is  {\em fully dynamic}.
    
    The running time of such dynamic algorithms are often  split into the time required for an {\em update} (to register a change in the storing data structure) and the time required for a {\em query} (to solve the problem on the current dataset). In dynamic algorithms, we require  both update and query time to be  {\em nearly logarithmic} or constant. The static versions   take linear time.

    Some known results on these problems are listed in \Cref{table:results}.
    As is standard, many of them are stated in terms of the {\em aspect ratio} of point set $P$. Let $d_{max}=\sup\{d(x,y):x,y\in P \text{ and } x\neq y\}$ and  $d_{min}=\inf\{d(x,y):x,y\in P \text{ and } x\neq y\}$. The  {\em aspect ratio} $\Delta$ of $P$ is $\Delta=\frac{d_{max}}{d_{min}}$.

    \begin{table}
     \setlength{\leftskip}{40pt}
    \scalebox{0.9}{
    \begin{tabular}{|cccccc|}
    \hline
     \multicolumn{6}{|c|}{Arbitrary Metric Space $(\mathcal{X},d)$} \\ \hline

    \multicolumn{1}{|c|}{Author} & \multicolumn{1}{c|}{Approx.} & \multicolumn{1}{c|}{Dimensions} & \multicolumn{1}{c|}{Update Time}   &   \multicolumn{1}{c|}{Query Time}     &
    \multicolumn{1}{c|}{Fixed}\\ \hline
    
    \multicolumn{1}{|c|}{Chan et al. \cite{chan2018fully}}       & \multicolumn{1}{c|}{$2+\epsilon$}        & \multicolumn{1}{c|}{High}           & \multicolumn{1}{c|}{O($k^2\frac{\log \Delta}{\epsilon}$) (avg.)}   &        \multicolumn{1}{c|}{$O(k)$}   &   \multicolumn{1}{c|}{$k,\epsilon$}                   \\ \hline
    
    \multicolumn{1}{|c|}{ Goranci et al. \cite{goranci2021fully} }       & \multicolumn{1}{c|}{$2+\epsilon$}        & \multicolumn{1}{c|}{Low}           & \multicolumn{1}{c|}{$O((2/\epsilon)^{O(dim(\mathcal{X}))} \log \Delta \log \log \Delta \cdot \ln \epsilon^{-1})$}               &        \multicolumn{1}{c|}{$O( \log \Delta + k)$}    &   \multicolumn{1}{c|}{$\epsilon$}     \\ \hline
    
    \multicolumn{1}{|c|}{Bateni et al. \cite{bateni2021optimal}}       & \multicolumn{1}{c|}{$2+\epsilon$}        & \multicolumn{1}{c|}{High}           & \multicolumn{1}{c|}{$O(\frac{\log \Delta \log n}{\epsilon}(k+  \log n)$)(avg.)}    &           \multicolumn{1}{c|}{$O(k)$}     &    \multicolumn{1}{c|}{$k,\epsilon$}               \\ \hline
   \multicolumn{1}{|c|}{This paper}       & \multicolumn{1}{c|}{$2+\epsilon$}        & \multicolumn{1}{c|}{Low}           & \multicolumn{1}{c|}{$O\left(2^{O(dim(\mathcal{X}))}\log \Delta\log\log \Delta\right)$}               &          \multicolumn{1}{c|}{$O( k^2(\log \Delta+(1/\epsilon)^{O(dim(\mathcal{X})}))$}     &    \\ \hline
    
    \end{tabular}
    }
    \ \\\     
    \setlength{\leftskip}{-90pt}
    \scalebox{0.9}{
   \begin{tabular}{|cccccc|}
    \hline 
    \multicolumn{6}{|c|}{Euclidean Space $(\mathbb{R}^D,\ell_2)$} \\ \hline
    \multicolumn{1}{|c|}{Author} & \multicolumn{1}{c|}{Approx.} & \multicolumn{1}{c|}{Dimensions} & \multicolumn{1}{c|}{Update Time}   &   \multicolumn{1}{c|}{Query Time}    &
    \multicolumn{1}{c|}{Fixed}\\ \hline

     \multicolumn{1}{|c|}{Chan \cite{chan2009dynamic}}       & \multicolumn{1}{c|}{$1+\epsilon$}        & \multicolumn{1}{c|}{Low}           & \multicolumn{1}{c|}{$O((\frac{1}\epsilon)^Dk^{O(1)}\log n)$ (avg)}    &    \multicolumn{1}{c|}{$O(\epsilon^{-D}k\log k \log n+(\frac{k}\epsilon)^{O(k^{1-1/D})})$}                        &  \multicolumn{1}{c|}{$k,\epsilon$}     \\ \hline 
     
   \multicolumn{1}{|c|}{Schmidt and Sohler \cite{schmidt2019fully} }       & \multicolumn{1}{c|}{$16$}        & \multicolumn{1}{c|}{Low}           & \multicolumn{1}{c|}{$O((2\sqrt{d}+1)^d\log^2 \Delta \log n)$ (avg.)}    &           \multicolumn{1}{c|}{$O((2\sqrt{d}+1)^d(\log \Delta+ \log n))$}            &     \multicolumn{1}{c|}{}     \\ \hline 
          
     \multicolumn{1}{|c|}{Schmidt and Sohler \cite{schmidt2019fully} }       & \multicolumn{1}{c|}{$O(f\cdot D)$}        & \multicolumn{1}{c|}{High}           & \multicolumn{1}{c|}{$O(D^2\log^2n \log \Delta n^{1/f})$ (avg.)}    &           \multicolumn{1}{c|}{$O(f\cdot D\cdot\log n\log \Delta)$}            &     \multicolumn{1}{c|}{$f$}     \\ \hline 
    
    \multicolumn{1}{|c|}{(*) Bateni et al. \cite{bateni2021optimal}}       & \multicolumn{1}{c|}{$f(\sqrt{8}+\epsilon)$}        & \multicolumn{1}{c|}{High}           & \multicolumn{2}{c|}{$O(\frac{\log \delta^{-1} \log \Delta}{\epsilon}Dn^{1/f^{2}+o(1)})$}             &   \multicolumn{1}{c|}{$\epsilon,f$}     \\ \hline
       \multicolumn{1}{|c|}{This paper}       & \multicolumn{1}{c|}{$1+\epsilon$}        & \multicolumn{1}{c|}{Low}           & \multicolumn{1}{c|}{$O\left(2^{O(D)}\log \Delta\log\log \Delta\right)$}               &          \multicolumn{1}{c|}{$O(D\cdot k(\log \Delta+(1/\epsilon)^{O(D)})2^{k\log k/\epsilon})$}     &    \\ \hline
    \end{tabular}}
        \caption{
        Previous results on approximate dynamic $k$-centers.  More information on the model used by each is in the text. Note that all algorithms listed provide correct results except for Schmidt and Sohler \cite{schmidt2019fully}, which maintains  $O(f\cdot D)$  with probability $1-1/n$, and Bateni et al. \cite{bateni2021optimal}, which  maintains a $f(\sqrt{8}+\epsilon)$ solution with  probability $1-\delta$. \cite{bateni2021optimal} also combines the updates and queries.
            }
            \label{table:results}
    \end{table}   The algorithms  listed in the table work under slightly different models.  More explicitly:
    \begin{enumerate}
        \item  For arbitrary  metric spaces, both \cite{goranci2021fully} and the current paper assume that the metric space has a bounded doubling dimension $dim(\mathcal{X})$ (see \Cref {def:doubling}).
        \item In ``Low dimension'',  update time may be  exponential in $D$; in  ``High dimension'' it may not.
        \item The ``fixed'' column denotes parameter(s) that must be fixed in advance when  initializing  the corresponding data structure, e.g., $k$ and/or $\epsilon.$ In addition, in both  
        \cite{schmidt2019fully,bateni2021optimal} for high dimensional space,
        $f\geqslant 1$ is a constant selected in advance that appears in both the approximation factor and running time. 
        
        \smallskip
        The  data structure used in the current paper is the navigating nets from \cite{krauthgamer2004navigating}. It does not require knowing $k$ or $\epsilon$ in advance but instead   supports them as parameters to the  query.
       \item In \cite{chan2009dynamic}, (avg) denotes that the update time is in  expectation (it is a randomized algorithm).
        \item  
        Schmidt and Sohler \cite{schmidt2019fully}  answers the  slightly different
         {\em membership query}.  Given $p$, it returns the cluster containing $p.$ 
        In low dimension, the running time of their algorithm is expected and amortized.
    \end{enumerate}

    \paragraph*{Our contributions and techniques}
    Our main results are two  algorithms for solving the dynamic  approximate $k$-center problem in, respectively,  arbitrary  metric spaces with a finite doubling dimension and in  Euclidean space. 

    \begin{enumerate}
        \item    Our first new algorithm is for {\em any metric space with finite doubling dimension}: 
        \begin{theorem} \label{thm:main2}
        Let $(\mathcal{X},d)$ be a metric space with a finite doubling dimension $D$. Let $P \subset X$ be  a dynamically changing set of points. We can  maintain $P$ in $O(2^{O(D)}\log \Delta\log \log \Delta)$ time per point insertion and deletion so as to support 
        $(2 + \epsilon)$ approximate $k$-center
        queries in 
        $O(k^2(\log \Delta+(1/\epsilon)^{O(D)}))$ time.
        \end{theorem}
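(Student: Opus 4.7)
The plan is to prove Theorem \ref{thm:main2} by taking Gonzalez's classical greedy $2$-approximation \cite{gonzalez1985clustering} as the static backbone and replacing its exact furthest-neighbor subroutine with the dynamic approximate furthest neighbor (AFN) primitive that is the paper's main technical contribution. Concretely, the data structure stored across updates is just the navigating net of Krauthgamer and Lee \cite{krauthgamer2004navigating} for $P$, whose insert/delete cost is known to be $O(2^{O(D)}\log\Delta\log\log\Delta)$; this directly yields the claimed update bound, so the remaining work is designing and analyzing the query procedure.

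For the query, given $k$ and $\epsilon$, set $\epsilon' = \epsilon/4$ and run the Gonzalez-style loop: pick an arbitrary $\tilde c_1 \in P$, then for $i=2,\ldots,k$ invoke the AFN procedure on the set $\tilde C_{i-1} = \{\tilde c_1,\ldots,\tilde c_{i-1}\}$ with accuracy $\epsilon'$ to obtain $\tilde c_i \in P$ satisfying
\[
d(\tilde c_i,\tilde C_{i-1}) \;\ge\; \frac{1}{1+\epsilon'}\max_{p\in P}\,d(p,\tilde C_{i-1}),
\]
and finally return the centers $\tilde C = \tilde C_k$ together with the AFN-computed radius $\tilde r$. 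The approximation analysis is the standard perturbation of Gonzalez's: suppose for contradiction $\tilde r > (2+\epsilon)r^*$ where $r^*$ is the optimum. Then there is some $p\in P$ with $d(p,\tilde C) > \tilde r/(1+\epsilon')$, while for every $i<j$, by the way $\tilde c_j$ was chosen at step $j$, one has $d(\tilde c_i,\tilde c_j)\ge d(\tilde c_j,\tilde C_{j-1})\ge \tilde r/(1+\epsilon')$. With $\epsilon' = \epsilon/4$ we get $\tilde r/(1+\epsilon') > 2r^*$, so the $k{+}1$ points $\tilde C \cup \{p\}$ are pairwise at distance strictly greater than $2r^*$. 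Pigeonholing them into the $k$ optimal clusters forces two of them into one cluster of diameter at most $2r^*$, a contradiction.

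The time bound follows by summing the cost of the $k$ AFN calls. The dynamic AFN procedure developed earlier in the paper, built on top of the navigating net, answers a single query against a set $\tilde C_{i-1}$ of size at most $k$ in time $O\bigl(k(\log\Delta+(1/\epsilon)^{O(D)})\bigr)$: a $\log\Delta$ term for traversing the net levels and an $(1/\epsilon)^{O(D)}$ term for the local enumeration forced by the doubling dimension, each multiplied by $|\tilde C_{i-1}|\le k$ because candidate points must be screened against every current center. Summing over $i=1,\ldots,k$ gives the claimed $O\bigl(k^2(\log\Delta+(1/\epsilon)^{O(D)})\bigr)$ query time. Note that $k$ and $\epsilon$ enter only at query time, so the navigating net need not be initialized with either, which is the promised non-fixed-parameter property.

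The main obstacle I expect is not the approximation analysis, which is essentially the classical one with an extra $(1+\epsilon')$ slack factor, but rather verifying that the dynamic AFN routine actually achieves the stated per-query cost when $\tilde C$ grows incrementally across the $k$ iterations and contains up to $k$ currently-selected centers. One has to check that the navigating-net-based search can prune levels using the best currently known distance $d(p,\tilde C_{i-1})$ so that the $\log\Delta$ factor is not replicated per center, and that the net's $2^{O(D)}$-packing properties cap the local enumeration at $(1/\epsilon)^{O(D)}$ candidates per level; all of this is delegated to the AFN theorem already established in the body of the paper, which the present theorem invokes as a black box.
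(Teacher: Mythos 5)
Your overall route is exactly the paper's: maintain a navigating net (giving the stated update time), answer a query by running Gonzalez's greedy with the exact furthest-neighbor call replaced by $\mbox{AFN}(P,C,\epsilon')$, and charge $k$ AFN calls at $O(k(\log\Delta+(1/\epsilon)^{O(D)}))$ each to get the query bound. The paper does this via \Cref{correctness_of_greedy_algorithm} (with $\epsilon'=\epsilon/5$ and a direct two-case analysis) plugged into \Cref{thm:NNAFN}, whereas you phrase correctness as a pigeonhole contradiction; these are equivalent in substance.

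However, your approximation accounting has a genuine slip. The AFN guarantee at step $j$ gives $d(\tilde c_j,\tilde C_{j-1})\ge \frac{1}{1+\epsilon'}\max_{p}d(p,\tilde C_{j-1})$, and the only link from $\max_p d(p,\tilde C_{j-1})$ to the reported radius is $\max_p d(p,\tilde C_{j-1})\ge \max_p d(p,\tilde C)\ge d(q,\tilde C)=\tilde r/(1+\epsilon')$, where $q$ is the final AFN point. So the pairwise separation you actually get is $d(\tilde c_i,\tilde c_j)\ge \tilde r/(1+\epsilon')^2$, not $\tilde r/(1+\epsilon')$ as you wrote. With $\epsilon'=\epsilon/4$ the contradiction then requires $(2+\epsilon)>2(1+\epsilon/4)^2=2+\epsilon+\epsilon^2/8$, which is false, so the $k+1$ points are not guaranteed to be pairwise more than $2r^*$ apart and the pigeonhole does not fire. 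The fix is only a constant: take $\epsilon'=\epsilon/5$ (as the paper does) or smaller and assume $\epsilon\le 1$, so that $2(1+\epsilon')^2<2+\epsilon$; with that change your argument closes and matches the paper's bounds.
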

        Compared with previous results (see table \ref{table:results}), our data structure does not require knowing $\epsilon$ or $k$ in advance, while the construction of the previous data structure depends on $k$ or $\epsilon$ as basic knowledge.
        \item     Our second new algorithm is for the Euclidean $k$-center problem:
        \begin{theorem} \label{thm:main}
        Let $P \subset \mathbb{R}^D$ be  a dynamically changing set of points. We can maintain $P$ in $O(2^{O(D)}\log \Delta\log \log \Delta)$ time per point insertion and deletion so as to support 
        $(1 + \epsilon)$ approximate $k$-center
        queries 
        in 
        $O(D\cdot k(\log \Delta+(1/\epsilon)^{O(D)})2^{k\log k/\epsilon})$ time.
        \end{theorem}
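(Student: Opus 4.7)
The plan is to reduce the dynamic problem to a static PTAS for Euclidean $k$-center whose only access to $P$ is through furthest-neighbor queries, and then replace every such exact query with a call to the dynamic approximate furthest-neighbor data structure developed earlier in the paper. The update bound $O(2^{O(D)}\log\Delta\log\log\Delta)$ then follows immediately from maintaining the underlying navigating net of \cite{krauthgamer2004navigating} under insertions and deletions, so the genuine work is to design and analyze the query algorithm.

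For the query I would adopt a B\u{a}doiu--Clarkson / B\u{a}doiu--Har-Peled--Indyk style PTAS. First, build a coreset $S \subseteq P$ of size $s = O(k/\epsilon)$ for the Euclidean $k$-center problem by the standard greedy furthest-point rule: start from an arbitrary point of $P$, and repeatedly add to $S$ the point of $P$ that is farthest from the current $S$. Each iteration is exactly one approximate-furthest-neighbor call against the navigating net, and the resulting $S$ admits a $(1+\epsilon)$-approximate $k$-center of $P$. Next, exhaustively enumerate all $k^{s} = 2^{O(k\log k/\epsilon)}$ partitions of $S$ into at most $k$ classes; for each partition compute the minimum enclosing ball of each class in $O(D)+(1/\epsilon)^{O(D)}$ time via an off-the-shelf Euclidean MEB PTAS, and retain the partition whose covering radius is smallest. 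Finally, verify with one more round of approximate-furthest-neighbor queries (one per candidate center, on the points of $P$ outside the union of candidate balls) that, after inflating the radius by $(1+\epsilon)$, every point of $P$ is covered; if not, increase the current radius guess and repeat inside the usual geometric scaling.

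The main obstacle is propagating the approximation error. Each approximate-furthest-neighbor query returns a $(1+\epsilon')$-approximate furthest point instead of the exact one, and that slack compounds through coreset construction, partition enumeration, and the coverage test. The key lemma I would prove is that choosing $\epsilon' = \Theta(\epsilon/k)$ (or some polynomial-in-$1/k$ slack) suffices to ensure that the greedy set $S$ is still a $(1+\epsilon)$-coreset for the original $k$-center problem and that the final coverage test's inflated radius remains within $(1+\epsilon)$ of the optimum. Absorbing the extra dependence on $k$ and $1/\epsilon$ into the $(1/\epsilon)^{O(D)}$ term of the furthest-neighbor subroutine, the query executes $O(k \cdot 2^{k\log k/\epsilon})$ approximate-furthest-neighbor calls of cost $O(\log\Delta+(1/\epsilon)^{O(D)})$ each, plus $O(D)$ per MEB on a constant-size subset of $S$, which multiplies out to the claimed $O(D\cdot k(\log\Delta+(1/\epsilon)^{O(D)})\cdot 2^{k\log k/\epsilon})$ query time.
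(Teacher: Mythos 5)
Your high-level plan --- maintain $P$ in a navigating net, and obtain the query algorithm by taking a B\u{a}doiu--Clarkson/BHI-style static PTAS whose only access to $P$ is through furthest-neighbor queries and substituting $\mbox{AFN}$ for $\mbox{FN}$ --- is exactly the paper's strategy (the paper instantiates it with the Kim--Schwarzwald algorithm, Theorems \ref{correctness_of_algorithm} and \ref{thm:algframe}). However, your specific query algorithm has a genuine gap. You first build a \emph{fixed} set $S$ of size $O(k/\epsilon)$ by the greedy rule ``repeatedly add the point of $P$ farthest from the current $S$'' and claim $S$ is a $(1+\epsilon)$-coreset for $k$-center. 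That claim is false in general: the greedy farthest-point rule only guarantees that after $m$ steps the covering radius of $S$ is comparable to the optimal $m$-center radius, not to $\epsilon$ times the optimal $k$-center radius. For points spread on a sphere of radius $r^*$ in $\mathbb{R}^D$, driving the covering radius down to $\epsilon r^*$ requires $(1/\epsilon)^{\Theta(D)}$ points, so with only $O(k/\epsilon)$ points an optimal cover of $S$ can badly fail to cover $P$ even after a $(1+\epsilon)$ inflation. The small-coreset constructions of B\u{a}doiu--Clarkson and BHI avoid this precisely by \emph{not} decoupling point selection from clustering: at each step the farthest point is taken from the current \emph{tentative centers}, its cluster membership is guessed, and only that cluster's center is updated; the per-cluster analysis (Lemma \ref{observation} and Corollary \ref{corollary} here) relies on the invariant that each newly selected point is far from its own cluster's current center. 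Your fixed-$S$-then-enumerate-partitions scheme does not maintain that invariant, so the $O(k/\epsilon)$ size bound and the coreset property cannot both be salvaged.

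Two smaller points. Your ``key lemma'' that $\epsilon'=\Theta(\epsilon/k)$ suffices is asserted rather than proved, and it is also unnecessary: in the paper's interleaved scheme a uniform $\mbox{AFN}(\cdot,\cdot,\epsilon/3)$ suffices because the error is absorbed cluster-by-cluster (see the proof of Theorem \ref{thm:algframe}), whereas $\epsilon'=\Theta(\epsilon/k)$ would inject an extra $k^{O(D)}$ factor into the AFN cost that is not obviously hidden by the stated bound. Second, your final ``verify and rescale the radius guess'' loop adds a dependence on the scaling range that the claimed query time does not account for; the paper needs no such outer search because the walking step of Algorithm \ref{ALg:MEB} converges to a $(1+\epsilon)$-approximate center in $O(1/\epsilon)$ iterations without guessing $r^*$.
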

    This algorithm seems to be the first deterministic dynamic solution for the Euclidean $k$-center problem. Chan \cite{chan2009dynamic} presents a randomized dynamic algorithm while they do not find a way to derandomize it.
    \end{enumerate}
    
    The motivation for our new approach was the observation that many previous results e.g., \cite{badoiu2003smaller,badoiu2002approximate,chan2009dynamic,gonzalez1985clustering,kim1}, on  static $k$-center, work by iteratively searching the furthest neighbor in $P$ from a changing  set of points $C.$

    The main technical result of this paper is an efficient procedure for calculating {\em approximate} furthest neighbors from a dynamically changing point set $P.$ This procedure, in turn,  will lead to the development of two new simple algorithms for maintaining approximate $k$-centers in dynamically changing point sets.

    Consider a set of $n$ points $P$ in some metric space $({\mathcal X},d).$  A nearest neighbor in  $P$  to a query point $q,$ is a point $p \in P$ satisfying
    $d(p,q) = \min_{p'\in P}d(p',q) = d(P,q).$ A {\em $(1 + \epsilon)$ approximate nearest neighbor to $q$} is a point $p \in P$ satisfying $d(p,q) \le (1+\epsilon) d(P,q).$

    Similarly,  a furthest neighbor  to a query point $q$ is a $p$ satisfying $d(p,q) = \max_{p'\in P}d(p',q)$.
    A  {\em $(1 + \epsilon)$ approximate furthest  neighbor to $q$ } is a point $p \in P$ satisfying $\max_{p' \in P}d(p',q) \le (1+\epsilon)d(p,q).$
    
    There exist efficient algorithms for maintaining a {\em dynamic} point set $P$ (under insertions and deletions) that, given  query  {\em point} $q$,  quickly permit calculating approximate nearest \cite{krauthgamer2004navigating} and furthest \cite{bespamyatnikh1996dynamic,pagh2015approximate,chan2016dynamic} neighbors to $q.$
    
    A {\em $(1 + \epsilon)$   approximate nearest neighbor} to a query {\em set} $C$, is a point $p \in P$ satisfying $d(p,C) \le (1+\epsilon) d(P,C)$.  Because ``nearest neighbor''  is decomposable, i.e., 
    $d(P,C) = \min_{q \in C} d(P,q),$ 
    \cite{krauthgamer2004navigating} also permits efficiently calculating an approximate nearest neighbor to set $C$ from a dynamically changing $P.$
    
    An approximate furthest  neighbor to a query {\em set} $C$ is similarly defined as a point $p \in P$ satisfying $\max_{p' \in P} d(p',C) \le (1+\epsilon) d(p,C).$  
    Our main new technical result is \Cref {thm:NNAFN}, which permits efficiently calculating an approximate  {furthest}  neighbor to query set $C$ from a dynamically changing $P.$
    We note that, unlike nearest neighbor, furthest neighbor is not a decomposable problem and such a procedure does not seem to have previously known.

    This technical result  permits the creation of new algorithms for solving the dynamic {\em $k$-center problem} in low dimensions.

\section{Searching for a $(1+\epsilon)$-Approximate Furthest Point in a Dynamically Changing Point Set}
\label{Sec:AFNC}
Let $(\mathcal{X},d)$ denote a fixed  metric space.
     \begin{definition}
     Let  $C,P \subset \mathcal{X}$ be finite sets of points  and
     $q \in \mathcal{X}$. Set 
     $$d( {C},q)= d(q,C) = \min_{q'\in{C}}d(q',q)
     \quad\mbox{and}\quad
     d({C},P)=\min_{p\in P}d(C,p).
     $$
     
     
            
            
  
       
            $p \in P$ is a   {\em furthest neighbor in $P$ to  $q$}  if 
            $d(q,p)= \max_{p' \in P}d(q,p').$ 
            
                     $p \in P$ is a    {\em  furthest neighbor in $P$ to set $C$}    if 
            $d(C,p)= \max_{p' \in P}d(C,p').$
            
            $p \in P$ is a    {\em $(1+\epsilon)$-approximate furthest neighbor in $P$ to $q$}   if
            $$\max_{p' \in P}d(q,p') \le (1+\epsilon)d(q,p).$$
%
%

            $p \in P$ is a    {\em $(1+\epsilon)$-approximate furthest neighbor in $P$ to $C$}   if
            $$\max_{p' \in P}d(C,p') \le (1+\epsilon)d(C,p).$$

            $\mbox{FN}(P,q)$ and 
             $\mbox{AFN}(P,q,\epsilon)$ will, respectively, denote procedures returning a furthest neighbor and a $(1+\epsilon)$-approximate furthest neighbor to $q$ in $P.$
            
                        $\mbox{FN}(P,C)$ and 
             $\mbox{AFN}(P,C,\epsilon)$ will, respectively, denote procedures returning a furthest neighbor and $(1+\epsilon)$-approximate furthest neighbor to $C$ in $P.$
%
%
%
%
%
    \end{definition}

   Our algorithm assumes that $\mathcal{X}$  has  finite doubling dimension.
  
   \begin{definition}[Doubling Dimensions]\label{def:doubling}
    The doubling dimension of a metric space $(\mathcal{X},d)$ is the minimum value $\dim(\mathcal{X})$ such that any ball $B(x,r)$ in $(\mathcal{X},d)$ can be covered by $2^{\dim(\mathcal{X})}$ balls of radius $r/2$. 
    \end{definition}
        
    It is known that the doubling dimension of the Euclidean space $(R^D,\ell_2)$ is $\Theta(D)$ \cite{heinonen2001lectures}. 
    
        
   Now let $(\mathcal{X},d)$ be  a metric space with  a finite doubling dimension  and 
   $P \subset \mathcal{X}$ be  a  finite set of points.
   Recall that 
   $d_{max}=\sup\{d(x,y):x,y\in P\}
   \quad\mbox{and}\quad
   d_{min}=\inf\{d(x,y):x,y\in P,\ x\neq y\}.$ and 
   The  {\em aspect ratio} $\Delta$ of $P$ is $\Delta=\frac{d_{max}}{d_{min}}$.  
    
    
    Our main technical theorem (proven below in \Cref{subsec:AFN}) is:
    \begin{theorem} \label{thm:NNAFN}
    Let $(\mathcal{X},d)$ be a metric space with finite doubling dimension and $P \subset \mathcal{X}$ be a point set stored by a 
        navigating net data structure \cite{krauthgamer2004navigating}.
        Let $C \subset \mathcal{X}$ be another point set.
        Then, 
        we can find a $(1+\epsilon)$-approximate furthest point among $P$ to  $C$ in $O\left(\mathcal{|C|}(\log \Delta+(1/\epsilon)^{O(\dim(\mathcal{X}))})\right)$ time, where $\Delta$ is the aspect ratio of set $P$.  
    \end{theorem}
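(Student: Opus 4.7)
\textbf{Proof plan for \Cref{thm:NNAFN}.}

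The plan is a top-down branch-and-bound traversal of the navigating net of $P$. Because $d(C,\cdot)$ is $1$-Lipschitz, an upper bound on $d(C,p)$ over the subtree rooted at a net representative $y$ at level $i$ can be read off from $d(C,y)$ and the scale $r_i=2^i$ via the triangle inequality and the $r_i$-covering property of $Y_i$.

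\textbf{Algorithm and correctness.} Initialize the working set $W:=Y_{i_{\max}}$ (constant size) and the running estimate $D^*:=\max_{y\text{ examined}}d(C,y)$. At the current level $i$: (i)~compute $d(C,y)=\min_{c\in C}d(c,y)$ in $O(|C|)$ time for each new $y\in W$ and update $D^*$; (ii)~halt and return $\arg\max d(C,y)$ once $r_i\le \Theta(\epsilon D^*)$; (iii)~prune any $y\in W$ with $d(C,y)+\Theta(r_i)\le(1+\epsilon)D^*$; (iv)~replace every surviving $y$ by its $Y_{i-1}$-children (the $2^{O(\dim(\mathcal{X}))}$ points of $Y_{i-1}$ within $O(r_i)$ of $y$). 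A telescoping induction using the triangle inequality shows that every $p\in P$ either lies under a surviving rep at the terminal level or under a rep pruned at some earlier level, and in the latter case $d(C,p)\le(1+\epsilon)D^*_{\text{final}}$ by construction. Hence $D_{\mathrm{opt}}\le(1+\epsilon)D^*_{\text{final}}$, certifying the output as a $(1+\epsilon)$-AFN to~$C$.

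\textbf{Complexity.} Each rep examination costs $O(|C|)$, so it suffices to bound the total number of examined reps by $O(\log\Delta+(1/\epsilon)^{O(\dim(\mathcal{X}))})$. Partition the levels according to $r_i$ versus $D^*$. In the \emph{refinement phase} $\epsilon D^*\lesssim r_i\lesssim D^*$, every active $y$ satisfies $d(c,y)\ge d(C,y)>(1+\epsilon)D^*-\Theta(r_i)$ for every $c\in C$, which (pinning any $c_0\in C$) places $y$ in an $O(r_i)$-thick spherical shell around $c_0$ of outer radius $O(D^*)$. The doubling-dimension bound on $r_i$-separated subsets of this shell gives $O((D^*/r_i)^{\dim(\mathcal{X})})$ reps per level, and summing the geometric series over the $O(\log(1/\epsilon))$ refinement levels yields $O((1/\epsilon)^{O(\dim(\mathcal{X}))})$. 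In the \emph{coarse phase} $r_i\gtrsim D^*$, the pruning rule is vacuous; here the intrinsic packing of $Y_i$ inside the ambient diameter-$d_{\max}$ region, together with the navigating-net branching invariants, caps each coarse level by $O(1)$ active reps for a total of $O(\log\Delta)$ examinations.

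\textbf{Main obstacle.} The delicate step is the coarse-phase bound: a naive level-by-level descent multiplies the working set by a factor $2^{\dim(\mathcal{X})}$ at each step, so without extra structure one could see $\Omega((d_{\max}/D^*)^{\dim(\mathcal{X})})$ reps before pruning becomes effective. The anticipated remedy is to seed $D^*$ with a constant-factor estimate of $D_{\mathrm{opt}}$ by running a preliminary navigating-net AFN-to-a-point query against an arbitrary $c\in C$ before launching the level-by-level descent, so that the pruning rule already engages at the top of the net and keeps each coarse level at $O(1)$ reps. Making this bootstrap argument rigorous—verifying in particular that the per-point estimate it produces is within a constant factor of the set-wise optimum $D_{\mathrm{opt}}$ and that the handoff into the refinement phase does not double-count reps—constitutes the main technical work of the proof.
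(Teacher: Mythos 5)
Your algorithm is essentially the paper's \Cref{alg:AFNC2}: a pruned level-by-level descent of the navigating net that keeps a representative $y$ only if $d(C,y)$ is within an additive $O(r)$ of the current maximum, and halts once $r=O(\epsilon\cdot\max)$. Your correctness argument (the true furthest point always retains a surviving ancestor within $O(r)$, so the final estimate is within $1+\epsilon$) is the paper's Lemma~\ref{lem:proof1}, and your refinement-phase packing bound is Lemma~\ref{lem:zbound}, modulo one slip: a surviving $y$ need not satisfy $d(c_0,y)=O(D^*)$ for a \emph{pinned} $c_0$ (it may be close to a different center while very far from $c_0$), so the active set lies in a union of $|C|$ annuli, one per center, and the per-level count carries an extra factor $|C|$; this is harmless for the stated bound.

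The genuine gap is the coarse phase, which you correctly flag as unresolved; your proposed bootstrap does not work, and it is also unnecessary. It is unnecessary because the levels should be partitioned by the \emph{true} optimum $D_{\mathrm{opt}}=\max_{p\in P}d(C,p)$ rather than by the running estimate $D^*$: whenever $r\geqslant 3D_{\mathrm{opt}}$, \emph{every} point of $Y_r$ (being a point of $P$) lies within distance $D_{\mathrm{opt}}\leqslant r/3$ of some $c\in C$, and since distinct points of $Y_r$ are at distance $\geqslant r$ apart (Lemma~\ref{lem:property_of_nn}(3)), each ball $B(c,r/3)$ contains at most one of them. So the active set at every such level has size at most $|C|$ with no pruning and no estimate of $D_{\mathrm{opt}}$ whatsoever, giving $O(|C|\log\Delta)$ examined representatives over all coarse levels; the remaining levels satisfy $\Omega(\epsilon)\cdot D_{\mathrm{opt}}\leqslant r\leqslant 3D_{\mathrm{opt}}$ (by the termination test) and hence number only $O(\log(1/\epsilon))$, which is exactly where the $(1/\epsilon)^{O(\dim(\mathcal{X}))}$ packing bound is charged. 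The bootstrap, by contrast, is unsound: a furthest-neighbor query to a single $c_0\in C$ does not yield a constant-factor estimate of the set-wise optimum --- if $P$ clusters near another center far from $c_0$, then $\max_p d(c_0,p)$ exceeds $D_{\mathrm{opt}}$ by an arbitrary factor, and seeding $D^*$ with an overestimate makes your pruning rule discard the ancestors of the true furthest point, destroying correctness. Replace the bootstrap with the separation-plus-covering observation above and the proof closes.
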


The {\em navigating net} data structure \cite{krauthgamer2004navigating} is described in more detail below. 
    \subsection{Navigating Nets \cite{krauthgamer2004navigating}}
   Navigating nets are very well-known structures for dynamically maintaining points in a metric space with finite doubling dimension, in a way that  permits approximate nearness queries.  To the best of our knowledge they have not been previously used for approximate ``furthest point from set'' queries.
    
  To describe the algorithm, we first need to quickly review some basic known facts about navigating nets.
        The following  lemma is critical to our analysis.
        \begin{lemma}\cite{krauthgamer2004navigating}\label{lem:doubling}
            Let $(\mathcal{X},d)$ be a metric space and $Y\subseteq \mathcal{X}$. If the aspect ratio of the metric induced on $Y$ is at most $\Delta$ and $\Delta\geqslant 2$, then $|Y|\leqslant \Delta^{O(\dim(\mathcal{X}))}$.
        \end{lemma}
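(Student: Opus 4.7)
The plan is to prove this by a standard packing argument that iterates the doubling property down from the diameter scale to the minimum-distance scale.

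First I would fix an arbitrary $y_0 \in Y$ and observe that, since every pairwise distance in $Y$ is at most $d_{max}$, the whole set $Y$ lies in the ball $B(y_0, d_{max})$. Next I would repeatedly invoke \Cref{def:doubling}: one application covers $B(y_0,d_{max})$ by $2^{\dim(\mathcal{X})}$ balls of radius $d_{max}/2$; $k$ applications yield a cover of $Y$ by at most $2^{k\,\dim(\mathcal{X})}$ balls of radius $d_{max}/2^k$.

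I would then choose $k=\lceil \log_2(2 d_{max}/d_{min})\rceil$, so that $d_{max}/2^k < d_{min}/2$. Any such small ball has diameter strictly less than $d_{min}$, and therefore contains at most one point of $Y$ (otherwise two distinct points of $Y$ would lie at distance $<d_{min}$, contradicting the definition of $d_{min}$). Consequently $|Y|$ is bounded by the number of covering balls, namely $2^{k\,\dim(\mathcal{X})}$. Plugging in the chosen $k$ gives $|Y|\le (4\Delta)^{\dim(\mathcal{X})}$, which, using $\Delta\ge 2$, is $\Delta^{O(\dim(\mathcal{X}))}$.

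There is really no hard step here; the only point requiring mild care is that the intermediate covering balls produced by iterating the doubling property need not be centered at points of $Y$, but this is irrelevant because we only use them as a covering to bound $|Y|$ through the final-scale packing argument. One should also note the edge case $|Y|\le 1$, in which $d_{min}$ is undefined but the bound holds trivially, so the argument above is run under the implicit assumption $|Y|\ge 2$.
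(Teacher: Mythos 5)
Your argument is correct and is essentially the standard packing proof of this fact; the paper itself gives no proof of \Cref{lem:doubling}, importing it directly from \cite{krauthgamer2004navigating}, and your iterated-doubling-plus-minimum-distance argument is exactly the one used there. The only nitpick is at the last scale: with $k=\lceil \log_2(2 d_{max}/d_{min})\rceil$ you only get $d_{max}/2^k \leqslant d_{min}/2$, so a closed ball of that radius has diameter $\leqslant d_{min}$ and could in principle contain two points realizing the minimum distance; halving once more fixes this and only changes the constant inside the $O(\dim(\mathcal{X}))$ exponent.
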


        We next introduce some  notation from \cite{krauthgamer2004navigating}: 
        \begin{definition}[$r$-net] \cite{krauthgamer2004navigating}
            Let $(\mathcal{X},d)$ be a metric space. For a given parameter $r>0$, a subset $Y\subseteq \mathcal{X}$ is an $r$-net of $P$ if it satisfies:
           \begin{enumerate}[(1)]
                \item For every $x,y \in Y$, $d(x,y)\geqslant r$;
                \item $\forall x\in P$, there exists at least one $y\in Y$ such that $x\in B(y,r)$. 
            \end{enumerate}
        \end{definition}
        
        We now start the description of the 
        navigating net  data structure. Set $\Gamma=\{2^i:i\in \mathbb{Z}\}$. Each $r\in \Gamma$ is called a {\em scale}. 
        For every $r\in \Gamma$,  $Y_r$ will denote  an $r$-net of $Y_{r/2}$. The base case is that for every scale $ r\leqslant d_{min}$, $Y_r=P.$ 
        

        Let $\gamma\geqslant 4$
         be  some fixed  constant.  
        For each scale $r$ and each $y\in Y_r$, the data structure stores the set of points
        \begin{equation}
        \label{eq:L}
            L_{y,r}=\{z\in Y_{r/2}: d(z,y)\leqslant \gamma\cdot r\}.
        \end{equation}
        $L_{y,r}$ is called the {\em scale $r$ navigation list of $y$}.
        
        Let $r_{max} \in \Gamma$ denote the smallest $r$ satisfying $|Y_r|=1$ and $r_{min}\in \Gamma$ denote the largest $r$ satisfying $L_{y,r}=\{y\}$ for every $y\in Y_r$. Scales $r\in [r_{min},r_{max}]$ are called {\em non-trivial} scales; all other scales are called  {\em trivial}. Since $r_{max}=\Theta(d_{max})$ and $r_{min}=\Theta (d_{min})$, the number of non-trivial scales is
         $O\left( \log_2 \frac {r_{max}} {r_{min}} \right) = O( \log_2 \Delta).$
%
        
        
       Finally, we need a few more basic properties of  navigating nets:
        \begin{lemma}\cite{krauthgamer2004navigating}(Lemma 2.1 and 2.2)\label{lem:property_of_nn}
        For each scale $r$, we have: 
        \begin{enumerate}[(1)]
        \item $\forall y \in Y_r$,
           $|L_{y,r}|=O(2^{O(\dim(\mathcal{X}))}).$
            \item $\forall z\in P$, $d(z, Y_r)<2r$;
            \item $\forall x,y \in Y_r$, $d(x,y)\geqslant r$.
        \end{enumerate} 
        \end{lemma}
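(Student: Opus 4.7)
The plan is to prove the three claims in order (3), (2), (1), since (1) will rely on (3). All three follow essentially from the definitions plus \Cref{lem:doubling}, so no new idea is needed — the work is just bookkeeping.

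For claim (3), I would simply invoke the definition of an $r$-net: since $Y_r$ is an $r$-net of $Y_{r/2}$, condition (1) of the $r$-net definition immediately gives $d(x,y)\geqslant r$ for every pair of distinct $x,y\in Y_r$.

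For claim (2), I would argue by induction on the scale $r\in \Gamma$, moving upward from $r\leqslant d_{\min}$. In the base case $Y_r = P$, so $d(z,Y_r)=0<2r$. For the inductive step, assume the claim at scale $r/2$: every $z\in P$ satisfies $d(z,Y_{r/2})<r$. Fix $z\in P$ and pick $y'\in Y_{r/2}$ achieving $d(z,y')<r$. Because $Y_r$ is an $r$-net of $Y_{r/2}$, condition (2) of the $r$-net definition supplies some $y\in Y_r$ with $y'\in B(y,r)$, i.e.\ $d(y',y)\leqslant r$. The triangle inequality then gives
\[
d(z,Y_r)\leqslant d(z,y)\leqslant d(z,y')+d(y',y)<r+r=2r,
\]
closing the induction.

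For claim (1), which is the only slightly nontrivial part, I would bound the aspect ratio of $L_{y,r}$ and apply \Cref{lem:doubling}. By definition $L_{y,r}\subseteq Y_{r/2}$ and every point of $L_{y,r}$ lies in $B(y,\gamma r)$. Hence for any $u,v\in L_{y,r}$ we have $d(u,v)\leqslant 2\gamma r$ by the triangle inequality, while claim (3) applied to $Y_{r/2}$ gives $d(u,v)\geqslant r/2$ whenever $u\neq v$. The induced aspect ratio on $L_{y,r}$ is therefore at most $4\gamma$, a constant $\geqslant 2$. \Cref{lem:doubling} now yields $|L_{y,r}|\leqslant (4\gamma)^{O(\dim(\mathcal{X}))}=2^{O(\dim(\mathcal{X}))}$, as claimed.

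The main obstacle is really only a bookkeeping subtlety in (2): the $r$-net property is stated for $Y_r$ covering $Y_{r/2}$, not $Y_r$ covering $P$ directly, so one must go through $Y_{r/2}$ via the inductive hypothesis before invoking the covering property — a two-step triangle-inequality argument rather than a one-step one. Beyond that, all three items are immediate consequences of the definitions and of \Cref{lem:doubling}.
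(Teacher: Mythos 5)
Your proof is correct. The paper does not actually prove this lemma --- it imports it verbatim from Krauthgamer and Lee's navigating nets paper (their Lemmas 2.1 and 2.2) --- and your three arguments (the $r$-net separation condition for (3), the telescoping/inductive triangle-inequality argument through $Y_{r/2}$ for (2), and the aspect-ratio bound of $4\gamma$ on $L_{y,r}\subseteq Y_{r/2}\cap B(y,\gamma r)$ combined with \Cref{lem:doubling} for (1)) are precisely the standard proofs from that reference, so there is nothing to compare beyond noting that your reconstruction is sound.
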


        
            We provide an example (\Cref{fig:NN}) of navigating nets in the Appendix. Navigating nets were originally designed to solve dynamic approximate nearest neighbor queries and are useful because they can be quickly updated.

        \begin{theorem} (\cite{krauthgamer2004navigating})
           Navigating nets
           use $O(2^{O(\dim(\mathcal{X}))}\cdot n)$ words. The data structure can be updated with an insertion of a point to $P$ or a deletion of a point in $P$ in time $(2^{O(\dim(\mathcal{X}))}\log \Delta\log\log \Delta)$      \footnote{\em  Note: Although the update time of the navigating net depends on $O(\log \Delta)$, it does not explicitly maintain the value of $\Delta.$ Instead  it  dynamically maintains the values $r_{max}=\Theta(d_{max})$ and $r_{min}=\Theta(d_{min})$. The update time depends on the number of non-trivial scales $\log \frac{r_{max}}{r_{min}}=\Theta(\log \Delta)$, but without actually knowing $\Delta.$}. This includes $(2^{O(\dim(\mathcal{X}))}\log \Delta)$ distance computations.
        \end{theorem}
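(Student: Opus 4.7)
The plan is to follow the original analysis of Krauthgamer and Lee. The proof rests on three ingredients already established in the preceding discussion: the per-list packing bound from \Cref{lem:property_of_nn}(1) that $|L_{y,r}| = O(2^{O(\dim(\mathcal{X}))})$; the fact that only $O(\log \Delta)$ scales are non-trivial; and a compressed representation ensuring that each point $p \in P$ contributes only $O(2^{O(\dim(\mathcal{X}))})$ explicitly stored navigation entries in total, independent of $\log \Delta$.

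For the \emph{space} bound, I would observe that each $p \in P$ belongs to $Y_r$ for a contiguous range of scales (from the smallest scale up to some maximal $r_p$), since $Y_{2r} \subseteq Y_r$. Naively storing $L_{y,r}$ explicitly at every scale where $y \in Y_r$ would cost $O(2^{O(\dim(\mathcal{X}))}\cdot n\log\Delta)$ words, which is too much. The fix is to exploit the fact that $L_{y,r}$ and $L_{y,r/2}$ typically differ by only a bounded number of points: one stores each navigation list only at the scales where it actually changes, and reconstructs intermediate lists via parent pointers across the scale hierarchy. Combined with \Cref{lem:property_of_nn}(1), the total explicit storage reduces to $O(2^{O(\dim(\mathcal{X}))}\cdot n)$ words.

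For the \emph{update time}, I would use a top-down descent through the scale hierarchy. On insertion of a point $q$, walk from $r_{max}$ down to $r_{min}$, maintaining at each scale $r$ the set $Z_r = Y_r \cap B(q,\gamma r)$ of candidate net points near $q$. By \Cref{lem:doubling} applied to $Z_r$, together with \Cref{lem:property_of_nn}(1), we have $|Z_r| = O(2^{O(\dim(\mathcal{X}))})$, and $Z_r$ is derived from $Z_{2r}$ by unioning the scale-$r$ navigation lists of its members and pruning those outside $B(q,\gamma r)$. Whenever $q$ is farther than $r$ from every element of $Z_r$, we promote $q$ into $Y_r$ and patch the nav lists of nearby points in $O(2^{O(\dim(\mathcal{X}))})$ time at that scale. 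Deletion is symmetric, with the extra complication that removing a representative in $Y_r$ may force promotion of replacement points at smaller scales; a standard charging argument bounds the total cascaded work within the same asymptotic budget. Summed over the $O(\log \Delta)$ non-trivial scales, this yields the claimed $O(2^{O(\dim(\mathcal{X}))}\log \Delta)$ distance computations per update. The additional $\log \log \Delta$ factor arises from an auxiliary balanced-tree index keyed by the currently active non-trivial scales, which is needed because the structure does not know $\Delta$ in advance and must dynamically discover and adjust $r_{min}$ and $r_{max}$ as points are inserted and deleted (see the footnote in the excerpt).

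The main obstacle is the space bound: a naive implementation stores $\Theta(\log \Delta)$ navigation lists per point, so the compressed representation must be designed carefully enough that \emph{reading} or \emph{updating} an individual $L_{y,r}$ on demand can still be done within the $O(2^{O(\dim(\mathcal{X}))})$ budget used by the update procedure above, otherwise the update-time analysis breaks. A secondary difficulty is correctness and cost control of deletion cascades: when a high-scale representative is removed, one must bound the total promotion work across all affected smaller scales and show that, amortized or in the worst case per update, it remains within $O(2^{O(\dim(\mathcal{X}))}\log \Delta)$ distance computations. Once these two points are handled, the stated space, running time, and distance-computation bounds follow by summation over the $O(\log \Delta)$ non-trivial scales.
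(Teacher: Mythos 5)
First, note that the paper does not prove this statement at all: it is imported verbatim from Krauthgamer and Lee \cite{krauthgamer2004navigating}, so there is no in-paper proof to match yours against; the relevant comparison is with the original navigating-nets analysis. Your sketch does identify the right skeleton of that analysis: the packing bound $|L_{y,r}|=2^{O(\dim(\mathcal{X}))}$, the $O(\log\Delta)$ non-trivial scales, the top-down descent maintaining a candidate set $Y_r\cap B(q,\gamma r)$ of size $2^{O(\dim(\mathcal{X}))}$ per scale, and the $\log\log\Delta$ overhead coming from a predecessor/dictionary structure over the $O(\log\Delta)$ active scales (needed precisely because $\Delta$, $r_{min}$ and $r_{max}$ are not known in advance). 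Those ingredients are consistent with how the bound is actually obtained.

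However, as written this is a plan rather than a proof: the two points you yourself flag as ``the main obstacle'' are exactly the technical core of the Krauthgamer--Lee argument, and you leave both unresolved. For the space bound you assert that storing lists only ``at the scales where they change'' compresses the total to $O(2^{O(\dim(\mathcal{X}))}n)$ words, but you give no charging argument showing the total number of explicitly stored entries is linear in $n$ (the original proof stores only non-trivial lists and charges each stored entry so that each point is charged $2^{O(\dim(\mathcal{X}))}$ times), nor do you show that an individual $L_{y,r}$ can still be accessed within the per-scale time budget under this representation --- without that, the $2^{O(\dim(\mathcal{X}))}\log\Delta$ distance-computation count does not follow. Likewise, for deletion you appeal to ``a standard charging argument'' for the cascade of promotions, hedging between amortized and worst-case, but this is precisely what must be proved (the cited bound is per-operation), and nothing in your sketch rules out a deletion at a high scale triggering work proportional to more than $2^{O(\dim(\mathcal{X}))}$ at each of the $O(\log\Delta)$ lower scales. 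So the proposal is a faithful outline of the right approach, but the stated space and update bounds do not yet follow from what you have argued; completing it essentially requires reproducing the missing lemmas of \cite{krauthgamer2004navigating}.
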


        

    \subsection{The Approximate Furthest Neighbor Algorithm $\mbox{AFN}(P,C,\epsilon$)}
    \label{subsec:AFN}
    
           \begin{algorithm}[h!]
        \caption{Approximate Furthest Neighbor: $\mbox{AFN}(P,C, \epsilon)$} 
        \label{alg:AFNC2}
        {\bf Input:} 
        A navigating net for set $P \subset \mathcal{X}$, set $C\subset \mathcal{X}$ and a constant $\epsilon>0$.\\
        {\bf Output:} A $(1+\epsilon)$-approximate furthest neighbor among $P$ to $C$ 
        \begin{algorithmic}[1]
            \State	Set $r=r_{max}$ and $Z_r=Y_{r_{max}}$;	
            \While{$r>\max \{\frac{1}{2}(\epsilon \cdot \max_{z\in Z_r}d(z,C)),r_{min}$\} }
            \State set 
            $Z_{r/2}=\bigcup_{z\in Z_r}\{y\in L_{z,r}: d(y,C)\geqslant \max_{z\in Z_r}d(z,C)-r\}$;
            \State set $r=r/2$
            \EndWhile
            \State Return $z\in Z_r$ satisfying $d(z,C)$ is maximal.
        \end{algorithmic}
        \end{algorithm}

     $\mbox{AFN}(P,C,\epsilon)$ is given in  \Cref {alg:AFNC2}. \Cref {fig:correctness_of_AFN} provides some geometric intuition.   $\mbox{AFN}(P,C,\epsilon)$
     requires that $P$ be stored in a navigating net  and the
    following definitions: \begin{definition} [The sets $Z_r$] \label{def:Z}\ 
        \begin{itemize}
        \item  $Z_{r_{max}}=Y_{r_{max}}$, where $|Y_{r_{max}}|=1$;
        \item If $Z_r$ is defined, 
         $Z_{r/2}=\bigcup_{z\in Z_r}\{y\in L_{z,r}: d(y,C)\geqslant \max_{z\in Z_r}d(z,C)-r\}$.
        \end{itemize}
        \end{definition}
         Note that, by induction,
        $Z_r \subseteq Y_r.$

        
        
        
            

        \begin{figure}[t]
        \centering
        \includegraphics[scale=0.3]{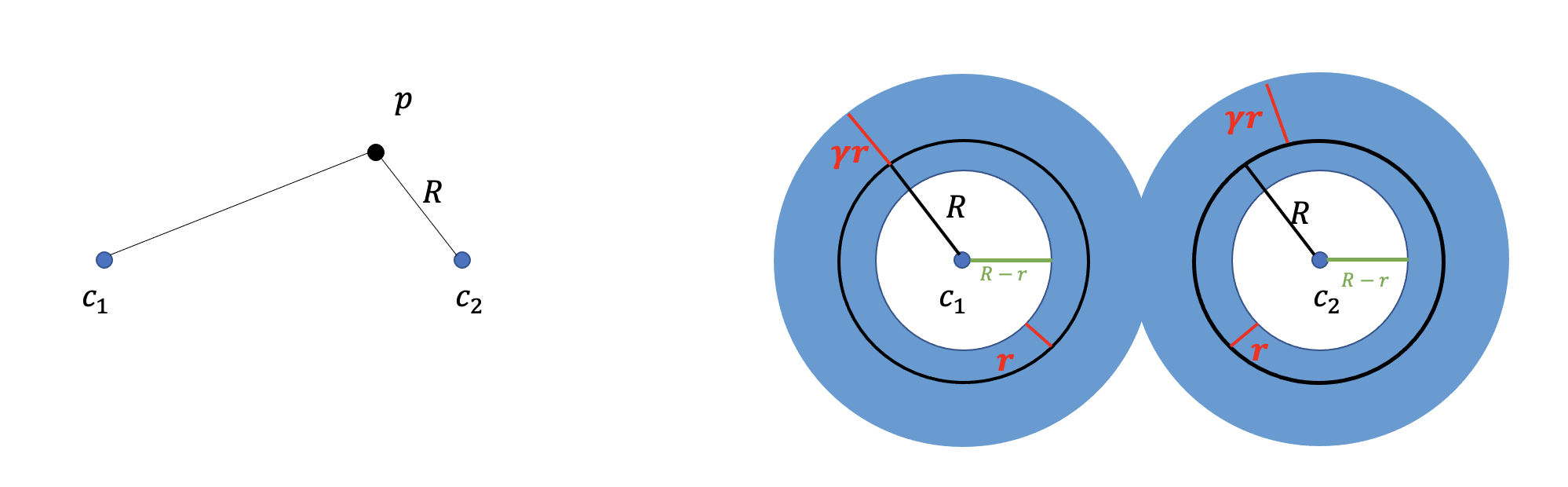}
        \caption{Illustration of line 3 of \Cref{alg:AFNC2}. $C=\{c_1,c_2\}$. Let  $p\in Z_r$ be  the furthest point in $Z_r$  to $C$ and set  $R=\max_{z\in Z_r}d(z,C)$ (in \cref{fig:correctness_of_AFN}, $R= d(c_2,p)).$, $Z_r\subseteq B(c_1,R)\cup B(c_2,R)$. If $z \in  B(c_i,R)$ and $y \in L_{z,r}$ then $y \in B(c_i,R + \gamma r).$     
        Next note that if $y \in Z_{r/2}$ then $y \in L_{z,r}$ for some $z \in Z_r$ and  for each $i,$
         $d(y,c_i)   \ge d(y,C)  \ge R -r.$
        This is illustrated in the right figure; $y$ must be in one of the two blue annulus  $B(c_i,R + \gamma r) \setminus B(c_i,R-r)$ ($i=1,2).$ Thus  $Z_{r/2}$ is contained in the union of the  annulus.}
        \label{fig:correctness_of_AFN}
        \end{figure}

        We now prove that $\mbox{AFN}(P,C,\epsilon)$ returns a $(1+\epsilon)$-approximate furthest point among $P$ to $C$. We start by showing  that, for every scale $r,$  the furthest point to $C$ is close to $Z_r.$
        
        
        \begin{lemma}\label{lem:proof1}
        Let $a^*$ be the furthest point to $C$ in $P$. Then, every set $Z_r$ 
        as defined in Definition \ref{def:Z},
        contains a point $z_r$ satisfying $d(z_r,a^*)\leqslant 2r$ \label{the furthest point is close to Z_r}
        \end{lemma}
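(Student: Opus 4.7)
\medskip

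\noindent\textbf{Proof plan for \Cref{lem:proof1}.}
My plan is to prove the claim by downward induction on the scale $r\in\Gamma$, starting at $r=r_{\max}$ and descending by halving, following the recursive definition of the sets $Z_r$. The key driver throughout will be property (2) of \Cref{lem:property_of_nn}, which guarantees that every point of $P$, in particular $a^*$, is within distance $<2r$ of some net point at scale $r$.

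For the base case, recall that $Z_{r_{\max}}=Y_{r_{\max}}$ consists of a single point $z$. By property (2) applied to $a^*\in P$, we have $d(a^*,z)=d(a^*,Y_{r_{\max}})<2r_{\max}$, so the claim holds at $r=r_{\max}$. For the inductive step, assume $z_r\in Z_r$ with $d(z_r,a^*)\le 2r$ has been found, and seek $z_{r/2}\in Z_{r/2}$ with $d(z_{r/2},a^*)\le r$. Again by property (2), there exists $y\in Y_{r/2}$ with $d(y,a^*)<r$, and this $y$ is the natural candidate; the remaining task is to verify that $y$ is actually included in $Z_{r/2}$ via $z_r$, i.e.\ that both conditions in Definition~\ref{def:Z} are satisfied.

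The first condition to check is $y\in L_{z_r,r}$, i.e.\ $d(y,z_r)\le \gamma r$. This is immediate from the triangle inequality: $d(y,z_r)\le d(y,a^*)+d(a^*,z_r)<r+2r=3r\le \gamma r$ since $\gamma\ge 4$. The second condition is $d(y,C)\ge \max_{z\in Z_r}d(z,C)-r$. Write $R=\max_{z\in Z_r}d(z,C)$. Since $a^*$ is the furthest point in the whole of $P$ from $C$ and $Z_r\subseteq Y_r\subseteq P$, we have $d(a^*,C)\ge R$. Combining this with the Lipschitz bound $|d(y,C)-d(a^*,C)|\le d(y,a^*)<r$ gives $d(y,C)\ge d(a^*,C)-r\ge R-r$, as required. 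Therefore $y\in Z_{r/2}$, and setting $z_{r/2}:=y$ completes the induction since $d(z_{r/2},a^*)<r\le 2(r/2)$.

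The only subtle point is the second condition: one might worry that the definition of $Z_{r/2}$ prunes points using the maximum over $Z_r$ rather than over all of $P$, so a priori a point close to $a^*$ could be filtered out. The saving observation is precisely that $a^*$ is the global $P$-furthest point, so $d(a^*,C)\ge R$ automatically; this is what makes the pruning threshold $R-r$ loose enough to retain a near neighbor of $a^*$. The induction terminates naturally at $r=r_{\min}$ or when the outer while loop of \Cref{alg:AFNC2} halts, and the same argument applies at the terminal scale.
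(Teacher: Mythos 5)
Your proof is correct and follows essentially the same downward induction as the paper's: find a net point $y\in Y_{r/2}$ near $a^*$ via \Cref{lem:property_of_nn}(2), show $y\in L_{z_r,r}$ using the triangle inequality and $\gamma\ge 4$, and show $y$ survives the pruning because $d(a^*,C)\ge\max_{z\in Z_r}d(z,C)$. The only cosmetic difference is that you invoke the $1$-Lipschitz property of $d(\cdot,C)$ directly where the paper expands it via the minimizing $c'\in C$.
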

        \begin{proof}

      The proof is illustrated in \Cref{fig:correctness_of_AFN2}. It works by downward induction on $r$. In  the base case $r = r_{max}$ and $Z_{r_{max}} = Y_{r_{max}}$, thus $d(a^*,Z_{r_{max}})\leqslant 2r$.
        
        For the inductive step, we  assume that $Z_r$ satisfies the induction hypothesis, i.e, $Z_r$ contains a point $z'$ satisfying $d(z',a^*)\leqslant 2r$. We will show that $Z_{r/2}$ contains a point $y$ satisfying $d(y,a^*)\leqslant r$.
        
        Since $Y_{r/2}$ is a $\frac{r}{2}$-net of $P$, there exists a point $y\in Y_{r/2}$ satisfying $d(y,a^*)\leqslant r$ (Lemma \ref{lem:property_of_nn}(2)).
        Then, $$d(z',y)\leqslant d(z',a^*)+d(a^*,y)\leqslant 2r+r=3r$$ and thus, because $\gamma \geqslant 4,$  $y\in L_{z',r}.$ Finally, let  $c'=\arg\min_{c_i\in C} d(y,c_i)$. Then $$d(y,C)=d(y,c')\geqslant d(a^*,c')-d(a^*,y)\geqslant d(a^*,C)-d(a^*,y)\geqslant \max_{z\in Z_r}d(z,C)-r.$$ Thus $y\in Z_{r/2}$.
        \end{proof}

         \begin{figure}[t]
        \centering
        \includegraphics[scale=0.22]{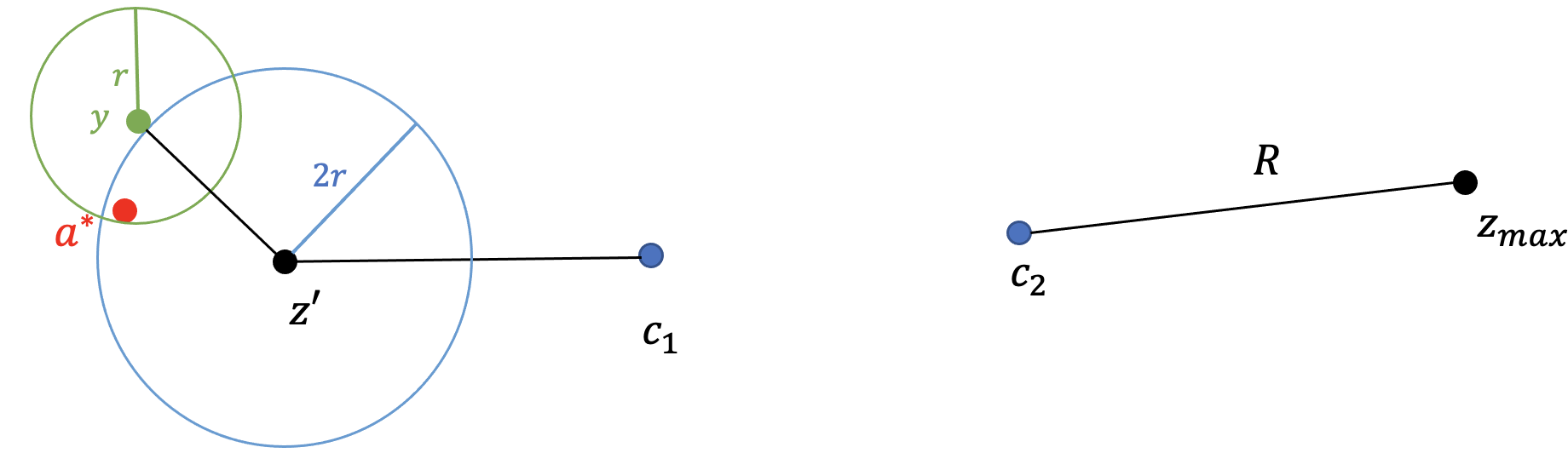}
        \caption{Illustration of  \Cref{lem:proof1}, for an example in which $C=\{c_1,c_2\}$ and $d(c_2,z_{max})=\max_{z\in Z_r}d(z,C)$. Suppose that, at scale $r$, $Z_r$ contains a point $z'$ satisfying $d(a^*,z')\leqslant 2r$. The proof shows  that $Z_{r/2}$ contains a point $y$ satisfying $d(y,a^*)\leqslant 2\cdot (r/2)=r$.}
        \label{fig:correctness_of_AFN2}
        \end{figure}
        
        Lemma \ref{lem:proof1} permits bounding the approximation ratio of algorithm $\mbox{AFN}(P,C,\epsilon)$.
        \begin{lemma}\label{lem:bound2}
        Algorithm $\mbox{AFN}(P,C,\epsilon)$ returns a point $q$ whose distance to $C$ satisfies $\max_{p\in P}d(p,C)\leqslant (1+\epsilon)d(q,C)$.
        \end{lemma}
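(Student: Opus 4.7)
The plan is to combine \Cref{lem:proof1} with an analysis of why the loop exits. Write $a^* := \arg\max_{p \in P} d(p,C)$, let $r^*$ denote the value of $r$ when the loop terminates, and let $q$ be the returned point, so $d(q,C) = \max_{z \in Z_{r^*}} d(z,C)$. Applying \Cref{lem:proof1} at scale $r^*$ produces some $z^* \in Z_{r^*}$ with $d(z^*,a^*) \le 2r^*$, and the triangle inequality together with the maximality of $q$ over $Z_{r^*}$ gives
\[
d(q,C) \;\ge\; d(z^*,C) \;\ge\; d(a^*,C) - d(z^*,a^*) \;\ge\; d(a^*,C) - 2r^*,
\]
so $d(a^*,C) \le d(q,C) + 2r^*$.

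At termination, the loop guard guarantees $r^* \le \max\{\tfrac{1}{2}\epsilon \cdot d(q,C),\, r_{min}\}$. In the typical case $r^* \le \tfrac{1}{2}\epsilon \cdot d(q,C)$, so $2r^* \le \epsilon \cdot d(q,C)$ and the inequality above collapses to $d(a^*,C) \le (1+\epsilon)\,d(q,C)$, which is exactly what we want.

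The only delicate case is when termination is instead forced by the scale floor, i.e., $r^* = r_{min}$ with $\tfrac{1}{2}\epsilon \cdot d(q,C) < r_{min}$, since then $2r^*$ need not be dominated by $\epsilon \cdot d(q,C)$. I would handle this by showing $a^* \in Z_{r_{min}}$ directly, which forces $d(q,C) = d(a^*,C)$ and the approximation becomes trivial. The ingredients are: by the definition of $r_{min}$ as the largest scale with $L_{y,r}=\{y\}$ for all $y$, one has $r_{min} < d_{min}/\gamma \le d_{min}$, hence $Y_{r_{min}} = P$ and $a^* \in Y_{r_{min}}$; next, \Cref{lem:proof1} applied at scale $2r_{min}$ supplies $z \in Z_{2r_{min}}$ with $d(z,a^*) \le 4r_{min} \le 2\gamma r_{min}$, so $a^* \in L_{z,2r_{min}}$; and the filter in \Cref{def:Z} admits $a^*$ because $d(a^*,C) = \max_{p\in P} d(p,C) \ge \max_{z' \in Z_{2r_{min}}} d(z',C) - 2r_{min}$. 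This $r_{min}$ edge case is the only part of the argument that is not immediate from \Cref{lem:proof1} and triangle inequality, and is where the main subtlety lies.
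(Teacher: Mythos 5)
Your proof is correct, and while the main case coincides with the paper's argument, the edge case is handled by a genuinely different route. In the typical case (termination because $r'\le \frac{1}{2}\epsilon\cdot\max_{z\in Z_{r'}}d(z,C)$) you do exactly what the paper does: apply \Cref{lem:proof1} at the terminal scale and use the triangle inequality. The divergence is in the case where termination is forced by the floor $r_{min}$. The paper observes that for scales $r\le r_{min}$ every navigation list is a singleton, so $Z_{r/2}\subseteq Z_r$; it then considers the (never actually executed) largest scale $r_1<r_{min}$ at which the $\epsilon$-condition would hold, applies the main-case bound there, and transfers it to the terminal set via $Z_{r_1}\subseteq Z_{r_{min}}$. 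You instead show that $a^*$ itself survives every filter of \Cref{def:Z} down to $Z_{r_{min}}$ --- using $Y_{r_{min}}=P$, the bound $d(z,a^*)\le 4r_{min}\le 2\gamma r_{min}$ to place $a^*$ in $L_{z,2r_{min}}$, and the fact that $a^*$ trivially meets the distance threshold --- so in this regime the returned value is exact, not merely $(1+\epsilon)$-approximate. Your route is arguably cleaner: it avoids reasoning about virtual scales below $r_{min}$ and sidesteps the existence question for $r_1$ (which is delicate if $d(a^*,C)$ is very small). Its only extra ingredient is the claim $r_{min}<d_{min}/\gamma$, which holds under the standard navigating-net convention that $L_{y,r}=\{y\}$ for \emph{all} scales $r\le r_{min}$ --- the same convention the paper's own Case 2 invokes when it asserts that property for every $r'\le r_{min}$.
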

        \begin{proof}
        Let $r'$ denote the value of $r$ at the end of the algorithm. Let $a^*$ be the furthest point to $C$ among $P$.
        Consider the two following conditions on $r':$
        \begin{enumerate}
            \item $r'\leqslant \frac{1}{2}(\epsilon\cdot \max_{z\in Z_{r'}} d(z,C))$. In this case, by Lemma \ref{the furthest point is close to Z_r}, there exists a point $z_{r'}\in Z_{r'}$ satisfying $d(z_{r'},a^*)\leqslant 2r'$. Let  $c'=\arg\min_{c_i\in C} d(z_{r'},c_i)$. 
        $$
        \begin{aligned}
            \max_{z\in Z_{r'}} d(z,C)&\geqslant d(z_{r'},C)=d(z_{r'},c')\geqslant d(a^*,c')-d(z_{r'},a^*)\\
           &\geqslant d(a^*,C)-2r'\geqslant d(a^*,C)-\epsilon \cdot \max_{z\in Z_{r'}} d(z,C)
        \end{aligned} $$  
        Thus, 
        \begin{equation}\label{eq:point 1}
            (1+\epsilon)\cdot \max_{z\in Z_{r'}} d(z,C)\geqslant d(a^*,C)=\max_{x\in P}d(x,C).
        \end{equation}
        \item  $r'\leqslant r_{min}$. In this case, recall that $Z_r \subseteq Y_r$ and that for every scale $r'\leqslant r_{min}$ and $\forall y\in Y_r$, $L_{y,r}=\{y\}$.
        Then
        $$Z_{r'/2}=\bigcup_{z\in Z_{r'}}\{y\in L_{z,r'}: d(y,C)\geqslant \max_{z\in Z_{r'}}d(z,C)-r'\}
        \subseteq
        \bigcup_{z\in Z_{r'}} \{z\} = Z_{r'}.
        $$
        
        
          
                    
     
        \end{enumerate}
        
        Now let 
        $r_1$ be the largest  scale for which
        $r_1 \leqslant \frac{1}{2}(\epsilon\cdot \max_{z\in Z_{r_1}} d(z,C))$ and $r_2$ the scale at which AFN($C,\epsilon$) terminates.
        
        From point 1, Equation (\ref{eq:point 1}) holds with $r'=r_1.$
        
        If $r_1 \ge r_{min}$, then $r_1=r_2$ and the lemma is correct.
        
        If $r_1 < r_{min}$ then $r_1 \le r_2 \le r_{min}$, so from point 2,  $Z_{r_1} \subseteq Z_{r_2}$ and
         $$
         (1+\epsilon)\cdot \max_{z\in Z_{r_2}} d(z,C)\geqslant
         (1+\epsilon)\cdot \max_{z\in Z_{r_1}} d(z,C)\geqslant d(a^*,C)=\max_{x\in P}d(x,C)$$
        Since $r_1$ satisfies condition 1, the second inequality holds. Hence, the lemma is again correct.
     %
        %
        \end{proof}
        
        We now analyze the running time of
        $\mbox{AFN}(P,C,\epsilon)$.
        
        \begin{lemma}\label{lem:zbound}
        In each iteration of
        $\mbox{AFN}(P,C,\epsilon)$,
        ${|Z_{r}|}\leqslant 4|C|(\gamma+2/\epsilon)^{O(\dim(\mathcal{X}))}$. \label{the size of $|Z_r|$}
        \end{lemma}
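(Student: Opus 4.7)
The plan is an induction on the scale $r$, bounding $|Z_r|$ by first exhibiting a union of annuli that contains $Z_r$ and then applying the doubling-dimension packing bound of Lemma \ref{lem:doubling} inside each annulus.

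For the base case $r = r_{\max}$ there is nothing to do, since $Z_{r_{\max}} = Y_{r_{\max}}$ has size $1$. For $r < r_{\max}$, I would exploit the fact that $Z_r$ was produced one iteration earlier from $Z_{2r}$ at scale $2r$: every $y \in Z_r$ lies in $L_{z,2r}$ for some $z \in Z_{2r}$ (hence within distance $2\gamma r$ of $z$), and also satisfies $d(y,C) \ge R_{2r} - 2r$, where $R_{2r} := \max_{z \in Z_{2r}} d(z,C)$. Picking for each such $z$ a $c \in C$ achieving $z \in B(c, R_{2r})$, the two conditions force
\begin{equation*}
Z_r \;\subseteq\; \bigcup_{c \in C}\bigl[\,B(c,\,R_{2r}+2\gamma r)\,\setminus\,B(c,\,R_{2r}-2r)\,\bigr],
\end{equation*}
which is exactly the picture of Figure \ref{fig:correctness_of_AFN}.

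The key quantitative step is to control $R_{2r}$. Since the while-loop body actually executed at scale $2r$ (otherwise $Z_r$ would never have been formed), its guard was true at that moment: $2r > \tfrac{1}{2}\epsilon\,R_{2r}$, and so $R_{2r} < 4r/\epsilon$. Therefore the outer radius of every annulus above is strictly less than $r(2\gamma + 4/\epsilon)$, and any two points of $Z_r$ lying in a common annulus around some $c \in C$ have mutual distance at most $2r(2\gamma + 4/\epsilon)$.

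To finish, I would fix $c \in C$ and count $Z_r$-points inside its annulus. Because $Z_r \subseteq Y_r$, these points are pairwise at distance at least $r$ by Lemma \ref{lem:property_of_nn}(3); combined with the diameter bound just derived, they form a subset of $\mathcal{X}$ of aspect ratio at most $4(\gamma + 2/\epsilon)$. Lemma \ref{lem:doubling} then bounds their number by $(\gamma + 2/\epsilon)^{O(\dim(\mathcal{X}))}$, and summing over the $|C|$ annuli gives the claimed $4|C|(\gamma + 2/\epsilon)^{O(\dim(\mathcal{X}))}$, where the extra constant absorbs into the base of the exponent. There is no real obstacle here: once the annular containment from Figure \ref{fig:correctness_of_AFN} and the one-step bound $R_{2r} < 4r/\epsilon$ from the loop guard are in place, the packing count is a routine application of the doubling property.
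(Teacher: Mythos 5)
Your proposal is correct and follows essentially the same route as the paper's proof: use the fact that the loop guard held at the previous scale to bound $\max_{z}d(z,C)$ by $O(r/\epsilon)$, deduce that $Z_r$ sits inside $|C|$ regions of radius $O((\gamma+2/\epsilon)r)$ around the points of $C$, and combine the $r$-net separation with Lemma \ref{lem:doubling} to get the packing bound. The only cosmetic differences are the index shift (you bound $Z_r$ via $Z_{2r}$ where the paper bounds $Z_{r/2}$ via $Z_r$) and your use of annuli where the paper simply uses the enclosing balls; the inner exclusion radius is never needed for the count.
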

        \begin{proof}
        We actually prove the equivalent statement that $|Z_{r/2}|\leqslant 4|C|(\gamma+2/\epsilon)^{O(D)}$.

        For all $y\in Z_{r/2}$, there exists a point $z'\in Z_r$ satisfying $y\in L_{z',r}$, i.e, $d(z',y)\leqslant \gamma \cdot r$. Let $c'=\arg \min_{c\in C}d(z',C)$. Thus,
        $$d(y,c')\leqslant d(c',z')+d(z',y)=d(z',C)+d(z',y)\leqslant
        \max_{z\in Z_{r}}\,  d(z,C)+\gamma \cdot r.$$

        An iteration of  $\mbox{AFN}(C,\epsilon)$ will  construct $Z_{r/2}$ only when $\max_{z\in Z_r}d(z,C)\leqslant \frac{2r}{\epsilon}$. Therefore, $d(y,c')\leqslant(\gamma+2/\epsilon)r.$ This implies
        $Z_{r/2}\subseteq \bigcup_{c\in C} B(c,(\gamma+2/\epsilon)r).$
        
        Next notice that,  since $Z_{r/2}\subseteq Y_{r/2}$ is a $r/2$-net, $\forall z_1,z_2\in Z_{r/2},\, d(z_1,z_2)\geqslant\frac{r}{2}$.
        
        Finally, for fixed  $c\in C$, $\forall x,y\in Z_{r/2}\cap B(c,(\gamma+2/\epsilon)r)$, we have $\frac{r}{2}\leqslant d(x,y)\leqslant 2(\gamma+2/\epsilon)r$. Thus, the aspect ratio $\Delta_{B\left(c,(\gamma+2/\epsilon)r\right)}$ of the set $Z_{r/2}\cap B(c,(\gamma+2/\epsilon)r)$  is at most 
        $\Delta_{B(c,(\gamma+2/\epsilon)r)}\leqslant \frac{2(\gamma+2/\epsilon)r}{\frac{r}{2}}=4(\gamma+2/\epsilon)$.
        Therefore, by Lemma \ref{lem:doubling}, 
        $\forall c \in C,\, |Z_{r/2}\cap B(c,(\gamma+2/\epsilon)r)|\leqslant (4(\gamma+2/\epsilon))^{O(\dim(\mathcal{X}))}.$ 
        
        Thus, $|Z_{r/2}|\leqslant |C|(4(\gamma+2/\epsilon))^{O(\dim(\mathcal{X}))}$
        \end{proof}
        
        \begin{lemma}\label{lem:itertions}
        $\mbox{AFN}(P,C,\epsilon)$ runs at most 
         $\log_2 \Delta+O(1)$ iterations.
        \end{lemma}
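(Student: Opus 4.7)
The plan is to observe that this is essentially a bookkeeping claim about the halving behavior of $r$ in the main loop. The algorithm initializes $r = r_{max}$ and, in each iteration, replaces $r$ by $r/2$ (line 4). The while loop guard ensures, among other things, that the loop terminates as soon as $r \leqslant r_{min}$, so the number of iterations is bounded by the number of times one can halve $r_{max}$ before reaching $r_{min}$.

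Concretely, after $k$ iterations the current scale is $r = r_{max}/2^k$; the loop can only continue if $r_{max}/2^k > r_{min}$, which gives $k < \log_2(r_{max}/r_{min})$. Using the facts from the navigating net construction that $r_{max} = \Theta(d_{max})$ and $r_{min} = \Theta(d_{min})$, we get
\[
\log_2 \tfrac{r_{max}}{r_{min}} \;=\; \log_2 \tfrac{d_{max}}{d_{min}} + O(1) \;=\; \log_2 \Delta + O(1),
\]
yielding the claimed bound. The other half of the termination condition, namely $r \leqslant \tfrac{1}{2}\bigl(\epsilon\cdot \max_{z\in Z_r} d(z,C)\bigr)$, can only cause the loop to terminate earlier, which does not affect an upper bound on the number of iterations.

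No step here looks difficult; the only thing to be slightly careful about is citing the correct bound $r_{max}/r_{min} = \Theta(\Delta)$ from the navigating nets description (where it was already noted that the number of non-trivial scales is $O(\log \Delta)$), so that the $O(1)$ slack absorbs the constants hidden in the $\Theta$'s relating $r_{max}, r_{min}$ to $d_{max}, d_{min}$.
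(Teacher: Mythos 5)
Your proposal is correct and follows essentially the same argument as the paper: count the halvings of $r$ from $r_{max}$ down to (roughly) $r_{min}$, note that the $\epsilon$-based part of the guard can only terminate the loop earlier, and absorb the constants in $r_{max}=\Theta(d_{max})$, $r_{min}=\Theta(d_{min})$ into the $O(1)$ term. No gaps.
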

        \begin{proof}
        The algorithm starts with $r=r_{max}$ and concludes when $r \ge r_{min}/2.$ Thus, the total number of iterations is at most
        $$
        \log_2 \frac {r_{max}} {r_{min}/2}=
        1+ \log_2 \frac {r_{max}} {r_{min}}
        = 1 + \log_2 \Theta\left(\frac {r_{max}} {r_{min}}\right)
        = 1 + \log_2 \Theta\left(\frac {d_{max}} {d_{min}}\right)=
        O(1) + \log_2 \Delta.
        $$
        \end{proof}
        
        Lemmas \ref{the size of $|Z_r|$} and  \ref{lem:itertions}, immediately imply that the  running time of AFN($C,\epsilon$) is at most $O(|C|(4(\gamma+2/\epsilon)))^{O(\dim(\mathcal{X}))}\log \Delta)$. 
        
        A  more careful analysis leads to the proof of \Cref{thm:NNAFN}. Due to space limitations the full proof  is deferred to \cref{App:careful_proof_of_running_time}.
%

\section{Modified $k$-Center Algorithms}\label{sec:alg}
    $AFN(P,C,\epsilon)$ will now  be used to design 
    two new 
    dynamic $k$-center algorithms. 
    
    Lemma \ref{lem:property_of_nn} hints that elements in $Y_r$ can be approximate {\em centers}.    
        This observation motivated Goranci et al.~\cite{goranci2021fully} to search  for the smallest $r$ such that $|Y_r|\leqslant k$ and return  the elements in $Y_r$ as centers. Unfortunately, used this way, the original navigating nets data structure 
        only returns an $8$-approximation solution. 
        \cite{goranci2021fully} improve  this by simultaneously maintaining multiple nets.     
        
%
        Although we also apply navigating nets to construct approximate  $k$-centers, our approach is very different from that of  \cite{goranci2021fully}. We do not use the elements in $Y_r$ as centers themselves. We only use the
        navigating net to support $AFN(P,C,\epsilon)$. Our algorithms result from substituting $AFN(P,c,\epsilon)$  for deterministic furthest  neighbor procedures in static algorithms.
 
    
    The next two subsections 
  introduce the two modified algorithms.

    \subsection{A Modified Version of  Gonzalez's \cite{gonzalez1985clustering}'s Greedy Algorithm}
Gonzalez
\cite{gonzalez1985clustering} described a simple and now well-known $O(kn)$ time $2$-approximation algorithm that works for any metric space. It operates by performing $k$ exact furthest neighbor from a set queries.
We just directly replace those exact queries with our new approximate furthest neighbor query procedure.

It is then straightforward  to modify Gonzalaz's proof  from \cite{gonzalez1985clustering}
 that his original algorithm is a $2$-approximation one, to prove that our new algorithm is a 
 $(2+\epsilon)$-approximation one.
 The details of the algorithm (\Cref{alg:GREEDY}) and the modified proof are provided in 
 \Cref{App:Gon}. This yields.
\begin{theorem}\label{correctness_of_greedy_algorithm}
    Let $P \subset \mathcal{X}$ be a finite set of points in a metric space $(\mathcal{X},d)$. Suppose $\mbox{AFN}(P,C,\epsilon)$ can be implemented in $T(|C|,\epsilon)$ time. \Cref{alg:GREEDY} constructs  a $(2+\epsilon)$-approximate solution for the $k$-center problem in 
       $O\left(k \cdot T\left(k, \frac {\epsilon} {5}\right)\right)$ time.
	\end{theorem}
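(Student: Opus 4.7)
The plan is to adapt Gonzalez's classical $2$-approximation analysis almost verbatim, paying attention only to where the loss from $\mbox{AFN}$ enters. The algorithm (\Cref{alg:GREEDY}) picks an initial $c_1\in P$ and then for $i=1,\ldots,k-1$ sets $c_{i+1}=\mbox{AFN}(P,C_i,\epsilon/5)$ where $C_i=\{c_1,\ldots,c_i\}$, ultimately returning $C_k$ together with the covering radius. The running-time half of the theorem is immediate: at most $k$ calls to $\mbox{AFN}$, each with query set of size $\le k$ and accuracy $\epsilon/5$, giving total cost $O(k\cdot T(k,\epsilon/5))$.

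For the approximation ratio, let $r^*$ be the optimal $k$-center radius and set $M_i=\max_{p\in P}d(p,C_i)$. Since $C_{i-1}\subseteq C_i$ forces $d(p,C_i)\le d(p,C_{i-1})$, the sequence $M_0\ge M_1\ge\cdots\ge M_k$ is non-increasing. By the defining property of $\mbox{AFN}$, each chosen $c_{j+1}$ satisfies $d(c_{j+1},C_j)\ge M_j/(1+\epsilon/5)$. Introduce one auxiliary point $c_{k+1}=\mbox{AFN}(P,C_k,\epsilon/5)$ which is only used in the analysis; it satisfies $d(c_{k+1},C_k)\ge M_k/(1+\epsilon/5)$. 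For any $i<j\le k+1$ we have
\[
d(c_i,c_j)\;\ge\;d(c_j,C_{j-1})\;\ge\;\frac{M_{j-1}}{1+\epsilon/5}\;\ge\;\frac{M_k}{1+\epsilon/5},
\]
where the first inequality uses $c_i\in C_{j-1}$ and the third uses monotonicity of $M$. Thus the $k+1$ points $c_1,\ldots,c_{k+1}$ are pairwise at distance at least $M_k/(1+\epsilon/5)$.

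Now apply the pigeonhole argument from \cite{gonzalez1985clustering}: the $k$ optimal clusters each have diameter at most $2r^*$, so two of the $k+1$ centers $c_a,c_b$ must lie in the same optimal cluster, giving $d(c_a,c_b)\le 2r^*$. Combined with the pairwise lower bound this yields
\[
\frac{M_k}{1+\epsilon/5}\le 2r^*,\qquad\text{i.e.,}\qquad M_k\le 2(1+\epsilon/5)r^*=(2+\tfrac{2}{5}\epsilon)r^*<(2+\epsilon)r^*,
\]
and since $M_k$ is by definition the true covering radius of $C_k$, the returned solution is a valid $(2+\epsilon)$-approximation. The only step requiring real care is propagating the $\mbox{AFN}$ slack through the chain of inequalities — the monotonicity $M_{j-1}\ge M_k$ is what lets a single $(1+\epsilon/5)$ factor suffice for every pairwise bound, and the choice $\epsilon/5$ leaves a comfortable margin inside $(2+\epsilon)$. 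No geometric argument beyond the one in Gonzalez is needed; the obstacle is purely bookkeeping of the approximation factor.
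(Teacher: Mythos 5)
Your argument is correct in substance and rests on the same Gonzalez-style analysis the paper uses, but it is packaged differently: the paper splits into two cases (every optimal ball $B(o_i,r^*)$ contains exactly one chosen center, versus some ball containing two), whereas you add an auxiliary $(k+1)$-st approximate-furthest point and invoke the pigeonhole principle once, which unifies the two cases into a single chain of inequalities. Both routes incur the same $(1+\epsilon/5)$ losses, and your observation that the monotonicity $M_{j-1}\ge M_k$ lets one such factor serve all pairs is exactly the right bookkeeping.

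One loose end: \Cref{alg:GREEDY} does not return $M_k=\max_{p\in P}d(p,C_k)$ --- computing that exactly is what the algorithm is avoiding --- it returns $r=(1+\epsilon/5)\,d(C_k,q)$ where $q=\mbox{AFN}(P,C_k,\epsilon/5)$. Two further one-line observations are needed. First, feasibility: the AFN guarantee gives $M_k\le(1+\epsilon/5)\,d(C_k,q)=r$, so $P\subseteq\bigcup_{c\in C_k}B(c,r)$. Second, the returned radius can exceed $M_k$: since $q\in P$ we have $d(C_k,q)\le M_k$, hence $r\le(1+\epsilon/5)M_k\le(1+\epsilon/5)\bigl(2+\tfrac{2}{5}\epsilon\bigr)r^*=\bigl(2+\tfrac{4}{5}\epsilon+\tfrac{2}{25}\epsilon^2\bigr)r^*\le(2+\epsilon)r^*$ for $\epsilon\le 1$ (the paper makes the same WLOG assumption). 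Your final sentence treats the returned radius as if it were $M_k$ itself and so silently drops this second factor of $(1+\epsilon/5)$; the slack from choosing $\epsilon/5$ does absorb it, but the step should be made explicit.
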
	
 Plugging \Cref{thm:NNAFN} into this proves   \Cref{thm:main2}.
 
\subsection{A Modified Version of the Kim  Schwarzwald \cite{kim1}  Algorithm} \label{Sec:KSmodified}
   
%
%
 
%

    In what follows, $D\ge 1$ is some arbitrary dimension.
    

	In 2020 
 \cite{kim1} gave  an $O(nD/\epsilon)$ time (1+$\epsilon$)-algorithm
	for the  Euclidean $1$-center (MEB) problem.  
	They further showed how to extend this 
 to obtain a (1+$\epsilon$)-approximation 
	to the Euclidean $k$-center in {$O(nD2^{O(k\log k/\epsilon)})$} time.

	Their algorithms use, as a subroutine,  a $\Theta(n)$ (or $\Theta(n |C|)$) time brute-force procedure for finding $\mbox{FN}(P,q)$ (or $\mbox{FN}(P,C)).$ 
	
	This subsection shows how replacing 
	$\mbox{FN}(P,q)$ (or $\mbox{FN}(P,C)$) by $\mbox{AFN}(P,q,\epsilon/3)$
	(or $\mbox{AFN}(P,C,\epsilon/3)$)
along with some other minor small changes, maintains the correctness of the algorithm. 
Our modified version of 
Kim and Schwarzwald \cite{kim1}'s
	 MEB  algorithm is presented as 
	Algorithm \ref{ALg:MEB}.


	Let $\epsilon>0$ be a constant.
    Their algorithm 
    runs in $O(1/\epsilon)$ iterations. The $i$'th  iteration  starts from some  point $m_i$ and uses $O(n)$ time to search for the point $p_{i+1}=\mbox{FN}(P,m_i)$ furthest from $m_i.$
    The  iteration  then  selects a ``good'' point $m_{i+1}$ on the line segment $p_{i+1} m_i$ as the starting point for the next iteration, where ``good''  means that the distance from $m_{i+1}$ to the optimal center is somehow bounded. The time to select such a  "good" point is $O(D)$.   The total running time of their algorithm is $O(nD/\epsilon)$. They also prove that the performance ratio of their algorithm is at most $(1+\epsilon)$.

	The running time of their algorithm is dominated by  the $O(n)$ time required to  find the point $\mbox{FN}(P,m_i)$.
	 As we will see in 
	Theorem \ref{correctness_of_algorithm} below,  
	finding the exact furthest point 
	$\mbox{FN}(P,m_i)$ was not necessary.  This could be replaced by $\mbox{AFN}(P,\epsilon/3,m_i).$ 
	%
	
	

	
	The first result is that this minor modification of  Kim and Schwarzwald \cite{kim1}'s algorithm still produces a $(1+\epsilon)$ approximation.

	
\begin{theorem}\label{correctness_of_algorithm}
	Let $P \subset \mathbb{R}^D$ be a set of points 
	whose minimum enclosing ball has (unknown) radius $r^*.$
	 Suppose $\mbox{AFN}(P,q,\epsilon)$ can be implemented in $T(\epsilon)$ time.

Let $c,r$ be the values returned by \Cref{ALg:MEB}. Then $P \subset B(c,r)$ and
$r \le (1+\epsilon) r^*$.  Thus \Cref{ALg:MEB}
 constructs  a $(1+\epsilon)$-approximate solution
and it runs in $O\left(DT\left(\frac \epsilon 3\right)\frac{1}{\epsilon}\right)$ time.
	\end{theorem}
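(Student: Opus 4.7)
\smallskip
\noindent\textbf{Proof plan for \Cref{correctness_of_algorithm}.} The strategy is to re-run the analysis of Kim and Schwarzwald \cite{kim1} verbatim, while tracking how the $(1+\epsilon/3)$ slack introduced by replacing $\mbox{FN}(P,m_i)$ with $\mbox{AFN}(P,m_i,\epsilon/3)$ propagates through their invariant. The hope (and the reason for the particular constant $\epsilon/3$) is that the slack compounds with the $(1+O(\epsilon))$-factor produced by their loop analysis in a benign multiplicative way, which, by the choice of $\epsilon/3$, still leaves the overall approximation ratio bounded by $1+\epsilon$.

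First I would recall the single step of \cite{kim1}: at iteration $i$, given the current candidate center $m_i$ and the true furthest point $p_{i+1}=\mbox{FN}(P,m_i)$, one chooses $m_{i+1}$ on the segment $\overline{m_i p_{i+1}}$ so that $\|m_{i+1}-c^*\|$ (where $c^*$ is the true MEB center) shrinks by a quantified amount whenever $\|m_i-p_{i+1}\|>(1+\epsilon)r^*$. Now replace $p_{i+1}$ by the approximate $p'_{i+1}=\mbox{AFN}(P,m_i,\epsilon/3)$. By definition of AFN,
\[
\max_{p\in P}\|p-m_i\|\;\le\;(1+\epsilon/3)\,\|p'_{i+1}-m_i\|.
\]
Hence whenever the algorithm would still perform a nontrivial update based on the exact furthest point, the approximate one is within a $(1+\epsilon/3)$ factor, and the direction from $m_i$ towards $p'_{i+1}$ differs from the direction to $p_{i+1}$ only in length. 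I would then re-derive the analogue of the Kim--Schwarzwald step lemma showing that $\|m_{i+1}-c^*\|$ still shrinks at the original rate, with an additional multiplicative loss of at most $(1+\epsilon/3)$; telescoping over the $O(1/\epsilon)$ iterations yields that the final $m$ satisfies $\max_{p\in P}\|p-m\|\le (1+\epsilon/3)\cdot(1+\epsilon/3)\,r^*$, and for $\epsilon\le 1$ the product $(1+\epsilon/3)^2\le 1+\epsilon$, giving the stated ratio.

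To establish $P\subset B(c,r)$ concretely, I would set $r$ to be $(1+\epsilon/3)$ times the distance $\|m-p'\|$ returned by the last AFN call, which by the AFN guarantee upper bounds $\max_{p\in P}\|p-m\|$; setting $c=m$ then gives a ball containing $P$ whose radius, by the preceding paragraph, is at most $(1+\epsilon/3)\cdot(1+\epsilon/3)\,r^*\le(1+\epsilon)r^*$. For the running time, the outer loop still runs $O(1/\epsilon)$ times; each iteration performs one AFN call costing $T(\epsilon/3)$ and $O(D)$ additional work to compute $m_{i+1}$ on the segment in $\mathbb{R}^D$, for a total of $O\!\left(\bigl(D+T(\epsilon/3)\bigr)/\epsilon\right)=O\!\left(D\,T(\epsilon/3)/\epsilon\right)$ since $T(\epsilon/3)\ge\Omega(D)$ whenever any distance in $\mathbb{R}^D$ is computed.

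The main obstacle I expect is the first step: faithfully porting the Kim--Schwarzwald ``good point'' argument to the approximate setting. Their choice of $m_{i+1}$ is tuned so that the quadratic potential $\|m_{i+1}-c^*\|^2$ drops by a specific amount expressed in terms of $\|m_i-p_{i+1}\|$ and $r^*$; replacing $\|m_i-p_{i+1}\|$ by a $(1+\epsilon/3)$ underestimate changes all the quantitative constants, and one needs to check that the convergence after $O(1/\epsilon)$ iterations still holds with the same asymptotic bound. The rest of the argument is then a bookkeeping exercise on the $(1+\epsilon/3)^2\le 1+\epsilon$ composition.
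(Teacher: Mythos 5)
Your proposal follows essentially the same route as the paper: establish $P\subseteq B(m_i,r_i)$ directly from the AFN guarantee, observe that any iteration in which $d(m_i,p_{i+1})\le(1+\epsilon/3)r^*$ yields a radius at most $(1+\epsilon/3)^2r^*\le(1+\epsilon)r^*$, and invoke a modified Kim--Schwarzwald convergence argument to show such an iteration must occur within $\lfloor 6/\epsilon\rfloor$ steps (the paper likewise states this as \Cref{lem:algonemainlemma} and defers its proof, via the $\delta_i$/$\varphi_i$ potential recursion, to the appendix). One small correction: the guarantee is obtained at \emph{some} iteration rather than at the last one, which is why \Cref{ALg:MEB} returns the minimum $r_i$ over all iterations (lines 6--7) rather than the ball from the final AFN call as your write-up suggests.
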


 Plugging \Cref{thm:NNAFN} into 
 \Cref{correctness_of_algorithm} proves  \Cref{thm:main} for $k=1$.
	\begin{algorithm}[h!]
	\caption{Modified MEB($P,\epsilon$)} 
	\label{ALg:MEB}
	{\bf Input:} 
	A set of points $P$ and a constant $\epsilon>0$.\\
	{\bf Output:} A $(1+\epsilon)$-approximate minimum enclosing ball $B(c,r)$ containing all points in $P$.\\
    The algorithm presented is just a slight modification of that of \cite{kim1}. The differences are that in
	\cite{kim1},
	line 4 was originally 
	$p_{i+1}= \mbox{FN}(P,m_i)$ and the four  $(1 + \epsilon/3)$ terms on lines 8 and 9, were all originally $(1 + \epsilon).$
	
	\begin{algorithmic}[1]
	    \State Arbitrarily select a point $p_1$ from $P$;
	    \State Set $m_1=p_1$, $r=\infty$, and $\delta_1=1$;
	    \For{$i=1$ to $\lfloor 6/\epsilon \rfloor$}
	    \State $p_{i+1}=$AFN$(P,m_i,\epsilon/3)$;
	    \State $r_i=\left(1 + \frac \epsilon 3\right)d(m_i,p_{i+1});$
	     \If{$r_i < r$ }
	    \State $c=m_i;$ $r= r_i;$
	    \EndIf
	    \State $m_{i+1}=m_i+(p_{i+1}-m_i)\cdot \frac{\delta_i^2+(1+\epsilon/3)^2-1}
	    {2(1+\epsilon/3)^2}$
	    \State $\delta_{i+1}=\sqrt{1-\left(\frac{1+(1+\epsilon/3)^2-\delta^2_i}{2(1+\epsilon/3)}\right)^2}$;
	    \EndFor
	\end{algorithmic}
    \end{algorithm}	


\begin{proof}

Every ball 
$B(m_i,r_i)$
generated by 
\Cref{ALg:MEB} encloses all of the points in $P,$ i.e.,
\begin{equation}
\label{eq:Algoneball}
\forall i,\quad \max_{p\in P}d(m_i,p)\leqslant r_i. 
\end{equation}

To prove the correctness of the algorithm it suffices to show that $r \le (1 +\epsilon) r^*.$
Without loss of generality, we assume that $\epsilon \leqslant 1.$ 

    Each iteration of lines 4-9 of $\mbox{MEB}(P,\epsilon)$ must end in one of the two following cases: 
    \begin{enumerate}[(1)]
        \item $d(m_i,p_{i+1})\leqslant (1+\epsilon/3)r^*$,
        \item $d(m_i,p_{i+1})> (1+\epsilon/3)r^*$.
    \end{enumerate}

    Note that if Case (1) holds for some $i,$  then, directly from \Cref{eq:Algoneball}
    %
 (using $\epsilon \leqslant 1$),
    $$\max_{p\in P}d(m_i,p)\leqslant r_i =
    (1+\epsilon/3)d(m_i,p_{i+1}) \leqslant (1+\epsilon/3)^2 r^*<(1+\epsilon)r^*$$
    This implies that if  Case 1 ever holds, \Cref{ALg:MEB}  is correct.
    
    The main lemma is
     \begin{lemma}\label{lem:algonemainlemma}
    If,\  $ \forall 1\leqslant i\leqslant j$, case (2) holds, i.e., $d(m_i,p_{i+1})> (1+\epsilon/3)r^*$, then
    $j \le \frac{6}{\epsilon}-1.$
      \label{beta}
    \end{lemma}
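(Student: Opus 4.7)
The plan is to prove, by induction on $i$, that under the hypothesis the deterministic sequence $\delta_i$ produced by \Cref{ALg:MEB} satisfies the geometric invariant $d(m_i,c^*) \le \delta_i r^*$ for every $i \le j+1$, where $c^*$ is the center of the optimum MEB. The base case $i=1$ is immediate because $m_1=p_1\in P$ and $\delta_1=1$. With this invariant in place, combining Case (2) with the triangle inequality will force $\delta_i > \beta-1 = \epsilon/3$ (writing $\beta=1+\epsilon/3$), while a separate analysis of the recurrence on line 10 will show that $\delta_i$ cannot remain that large for more than $6/\epsilon - 1$ iterations.

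The inductive step is the main geometric ingredient. It rests on the Euclidean identity
\begin{equation*}
|m_{i+1} - c^*|^2 = (1-\alpha)\,|m_i - c^*|^2 + \alpha\,|p_{i+1} - c^*|^2 - \alpha(1-\alpha)\,|m_i - p_{i+1}|^2,
\end{equation*}
valid for any $\alpha\in\mathbb{R}$ when $m_{i+1} = (1-\alpha)\,m_i + \alpha\,p_{i+1}$. Substituting the inductive hypothesis $|m_i-c^*|\le\delta_i r^*$, the trivial bound $|p_{i+1}-c^*|\le r^*$ (from $p_{i+1}\in P$), and the Case (2) hypothesis $|m_i-p_{i+1}|>\beta r^*$, the right-hand side is bounded above by $r^{*2}\bigl[(1-\alpha)\delta_i^2 + \alpha - \alpha(1-\alpha)\beta^2\bigr]$. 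A short calculation shows that the specific value $\alpha = \alpha_i = (\delta_i^2+\beta^2-1)/(2\beta^2)$ used on line 9 is exactly the minimizer of this quadratic in $\alpha$, and substituting it turns the bracketed quantity into exactly $\delta_{i+1}^2$ as defined on line 10, closing the induction.

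For the recurrence analysis I would change variables to $a_i = (\delta_i^2+\beta^2-1)/(2\beta)$; using $\delta_i^2 = 2\beta a_i - \beta^2 + 1$, the expression inside the square root on line 10 simplifies to $\delta_{i+1}^2 = 1-(\beta-a_i)^2$, which yields the clean recursion $a_{i+1} = a_i - a_i^2/(2\beta)$ with initial value $a_1 = \beta/2$. Taking reciprocals and applying $1/(1-x)\ge 1+x$ gives $1/a_{i+1}\ge 1/a_i + 1/(2\beta)$, which telescopes to $a_i \le 2\beta/(i+3)$.

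To finish, Case (2) together with the triangle inequality gives $\beta r^* < d(m_i,p_{i+1}) \le d(m_i,c^*) + d(c^*,p_{i+1}) \le (\delta_i+1)r^*$, hence $\delta_i > \beta-1$, which by the same algebra used to define $a_i$ is equivalent to $a_i > \beta-1$. Combining $a_j > \beta-1$ with $a_j \le 2\beta/(j+3)$ yields $(j+3)(\beta-1) < 2\beta$, i.e., $j+3 < 2\beta/(\beta-1) = 6/\epsilon + 2$, and therefore $j \le 6/\epsilon - 1$. The step I expect to be most delicate is verifying that the algorithm's $\alpha_i$ is the right minimizer given only the data available to the algorithm (namely $\delta_i$ and $\beta$, not the geometric angle to $c^*$) and that the inequality signs line up so that the worst case for the geometric bound is attained exactly at the Case (2) threshold $|m_i-p_{i+1}| = \beta r^*$; the subsequent algebra on the recurrence is routine.
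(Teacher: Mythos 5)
Your proof is correct, and while its overall architecture matches the paper's (establish the invariant $d(m_i,c^*)\le\delta_i r^*$ by induction, then show the $\delta_i$ recurrence cannot sustain $\delta_i>\epsilon/3$ for more than $6/\epsilon-1$ steps), the key inductive step is argued by a genuinely different and more self-contained route. The paper proves the invariant via Corollary \ref{corollary}, which invokes the geometric tangent-ball Lemma \ref{observation} of Kim and Schwarzwald together with the construction of the balls $B$, $B'$, $B''$ and the tangent point $m$; you instead use the convex-combination identity $|(1-\alpha)u+\alpha v|^2=(1-\alpha)|u|^2+\alpha|v|^2-\alpha(1-\alpha)|u-v|^2$, bound the three terms (noting $\alpha_i\in(0,1)$ so all three substitutions push the bound in the right direction), and verify that evaluating at the algorithm's $\alpha_i=(\delta_i^2+\beta^2-1)/(2\beta^2)$ yields exactly $\delta_{i+1}^2$ from line 10 — your worry about needing $\alpha_i$ to be the true minimizer is unfounded, since only the evaluation at $\alpha_i$ is used. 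This buys a shorter, purely algebraic proof that avoids importing Lemma \ref{observation} and works verbatim in any inner-product space, at the cost of obscuring the geometric picture (why the step size is optimal). Your recurrence analysis is, after the substitution $a_i=\beta\alpha_i$, identical to the paper's: the paper's $\varphi_i=1/(1-\beta_i)$ equals $\beta/a_i$, its inequality $\varphi_i\ge\varphi_{i-1}+\tfrac12$ is your $1/a_{i+1}\ge 1/a_i+1/(2\beta)$, and both telescope from the same initial value to the same bound $j\le 6/\epsilon-1$. The one point to state explicitly when writing this up is that the Case (2) hypothesis at step $i$ (via $\delta_i>\beta-1$) is also what guarantees the quantity under the square root on line 10 is nonnegative, so that $\delta_{i+1}$ is a well-defined real number and the induction can continue; you have all the ingredients for this but should flag it.
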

    The proof of  \Cref {lem:algonemainlemma} is just a straightforward modification of the proof given in Kim and Schwarzwald \cite{kim1} for their original algorithm and is  therefore omitted. For completeness we provide the full modified proof in 
    \Cref{App:KSproof}.
    
    \Cref {lem:algonemainlemma} implies that, by the end of the algorithm, Case 1 must have occurred at least once, so
    $r \le (1+\epsilon)r^*$ and the algorithm outputs a correct solution. Derivation of the running time of the algorithm is straightforward, completing the proof of
    \Cref {correctness_of_algorithm}.
    \end{proof}
   
    \cite{kim1} discuss (without providing details) how to use the "guessing" technique of \cite{badoiu2002approximate,badoiu2003smaller}) to extend their MEB algorithm to yield  a  $(1+\epsilon)$-approximation solution to the $k$-center problem for $k\geqslant 2$.

    For MEB, the Euclidean  $1$-center, in each iteration, they  maintained the location of a candidate center $c$ and computed a furthest point to $c$ among $P$. For the Euclidean $k$-centers, in each step, they  maintain locations of a set $C$ of candidate centers,   $|C|\leqslant k$ and compute a furthest point to $C$ among $P$ using a $\mbox{FN}(P,C)$ procedure.
    
    Again we can modify their algorithm by replacing the $\mbox{FN}(P,C)$ procedure by a $\mbox{AFN}(P,C,\epsilon)$ one,  computing an {\em approximate}  furthest point to $C$ among $P$. This will prove \Cref{thm:main}.
    
    

    
    
    The full details of a modified version of their algorithm have been provided in \cref{App:KimAlgo}, which uses
	$\mbox{AFN}(P,C,\epsilon)$ in place of $\mbox{FN}(P,C)$, as well as an analysis of correctness and run time.

\section{Conclusion}
\label{sec:conc}

Our main new  technical contribution is 
 an algorithm, $\mbox{AFN}(P,C,\epsilon)$
that finds a 
 $(1+\epsilon)$-approximate furthest point in $P$ to $C.$
This works on top of a navigating net data structure \cite{krauthgamer2004navigating} storing $P.$

The proofs of  \Cref{thm:main2,thm:main}
follow immediately by
maintaining  a navigating net  and 
plugging 
$\mbox{AFN}(P,C,\epsilon)$ into 
\Cref{correctness_of_greedy_algorithm,thm:algframe}, respectively.

These provide 
a fully dynamic and deterministic $(2+\epsilon)$-approximation algorithm for the $k$-center problem in a metric space with finite doubling dimension and a $(1+\epsilon)$-approximation algorithm for the Euclidean $k$-center problem, where $\epsilon,k$ are  parameters given at query time.


One limitation of our algorithm is that, because $\mbox{AFN}(P,C,\epsilon)$  is built on top of navigating nets,  it depends upon aspect ratio $\Delta$. This is the only dependence of the $k$-center algorithm on $\Delta.$
An interesting future direction  would be to develop algorithms for $\mbox{AFN}(P,C,\epsilon)$ in special metric spaces 
built on top of other structures that are independent  of  $\Delta.$  This would automatically lead to algorithms for approximate $k$-center that, in those spaces,  would also be independent of $\Delta.$

\newcommand{\etalchar}[1]{$^{#1}$}

\newpage
\appendix

\section{A  Navigating Nets Example}
       \begin{figure}[h!]
        \centering
        \includegraphics[scale=0.2]{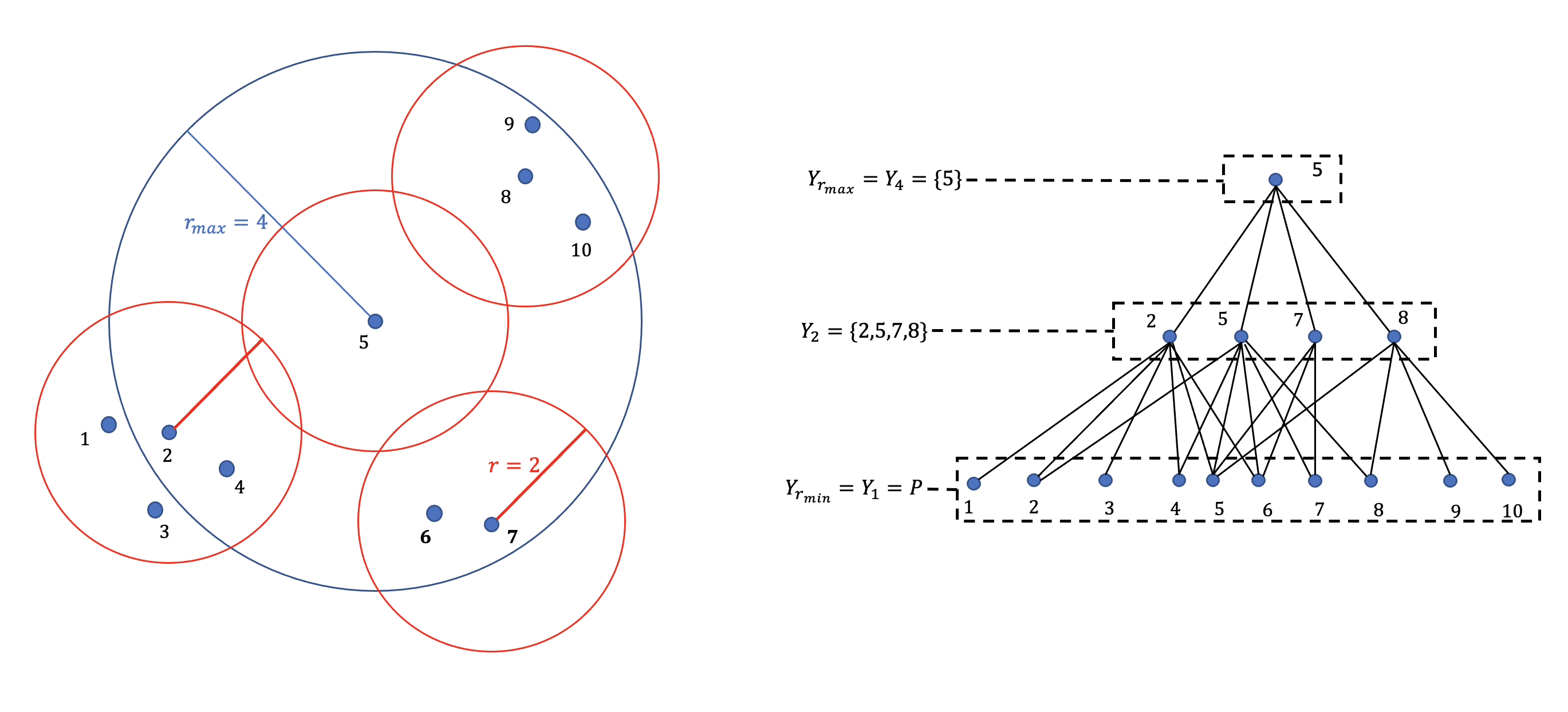}
        \caption{A navigating nets example with $P=\{1,2,...,10\},$ $r_{max}=4$ and $r_{min}=1.$ $Y_4=\{5\},$ $Y_2=\{2,5,7,8\}$ and $Y_{1}=P$. We note that $\forall x\in Y_2,\ d(x,Y_4)\leqslant 4$ and $\forall x\in P,\ d(x,Y_4)\leqslant 2\cdot 4$. $L_{2,2}=\{1,2,3,4,5,6\}$, $L_{5,2}=\{2,4,5,6,7,8\}$, $L_{7,2}=\{5,6,7\}$, and $L_{8,2}=\{5,8,9,10\}$. }
        \label{fig:NN}
        \end{figure}

\section{The Proof of \Cref {thm:NNAFN}}
\label{App:careful_proof_of_running_time}
       \begin{proof}

        
        It only remains to show that the  running time of $\mbox{AFN}(P,C,\epsilon)$ is bounded by $O(|C|\log \Delta) +O(|C|(1/\epsilon)^{O(\dim(\mathcal{X}))})$. We do this by splitting the set of scales $r$ processed by line 2 of \Cref{alg:AFNC2} into two ranges,
        (1) $\frac{r}{3}\geqslant \max_{x\in P}d(x,C)$ and (2) $\frac{r}{3}< \max_{x\in P}d(x,C).$  
        
        We then study the two cases separately : 
        For (1)  we will show a better bound on $|Z_r|$ than \Cref {lem:zbound}; for (2) we will show that the number of processed scales is small.
        %
        \begin{itemize}
            \item[(1)] When  $\frac{r}{3}\geqslant \max_{x\in P}d(x,C)$, the size of $Z_r$ is small. To see this note that 
            $$\max_{z\in Z_r}d(z,C)\leqslant\max_{x\in P}d(x,C)\leqslant \frac{r}{3}$$
            Thus, for each $z\in Z_r$, there exists a $c\in C$ such that $z\in B(c,\frac{r}{3})$, i.e., $Z_r\subseteq \bigcup_{c\in C}B(c,\frac{r}{3})$. Additionally, $Z_r\subseteq Y_r$, so, $\forall x,y\in Z_r$,  $d(x,y)\geqslant r$. Since the diameter of the ball $B(c,\frac{r}{3})$ is smaller than $r$, $|Z_r\cap B(c,\frac{r}{3})|\leqslant 1$ for every $c\in C$. Therefore, $|Z_r|\leqslant |C|$.
            
            
            By Lemma \ref{lem:itertions}, the number of iterations is at most $\log \Delta+O(1)$. Hence, the total running time for Case (1) is
            $O(|C|(\log \Delta +1))
            =O(|C|\log \Delta).$
            
            \item[(2)] When  $\frac{r}{3}< \max_{x\in P}d(x,C)$, although the size of $Z_r$ can be larger, the number of possible  iterations  will be small.
            
            By Lemma \ref{the size of $|Z_r|$}, $|Z_r|\leqslant 4|C|(\gamma+2/\epsilon)^{O(\dim(\mathcal{X}))}$. 
            Let $r'$ be the value when the algorithm terminates, then
            $2r'>\frac{\epsilon}{2} \max_{z\in Z_{r'}}d(z,C)$. 
            From Lemma \ref{lem:bound2},
            $(1+\epsilon)\max_{z\in Z_{r'}}d(z,C)\geqslant \max_{x\in P}d(x,C)$.We have
            $$3\cdot \max_{x\in P}d(x,C)>r\geqslant r'>\frac{\epsilon}{4}\max_{z\in Z_{r'}}d(x,C)\geq \frac{\epsilon}{4(1+\epsilon)}\max_{x\in P}d(x,C)$$
            

            Thus, the total number of Case (2) iterations  is at most $O(\log (\frac{1}{\epsilon}))$ and  the total  running time in  Case  (2) is $O(|C|(4(\gamma+2/\epsilon))^{O(\dim(\mathcal{X}))}\log (1/\epsilon)).$
        \end{itemize}
            Combining (1) and (2), the total running time of the algorithm is $O(|C|(\log \Delta))+O\left(|C|(1/\epsilon)^{O(\dim(\mathcal{X}))}\right).$
        \end{proof}

\section{The Modified Version of Gonzales's Greedy Algorithm}\label{App:Gon}

As noted in the main text, \Cref{alg:GREEDY} is essentially Gonzalez's \cite{gonzalez1985clustering}
 original algorithm  with $\mbox{FN}(P,C)$ replaced by   $\mbox{AFN}(P,C,\epsilon/5).$

\subsection{The Algorithm}   
    \begin{algorithm}[h!]
        \caption{Modified Greedy Algorithm: $GREEDY(P,\epsilon)$} 
        \label{alg:GREEDY}
        {\bf Input:} 
        A set of points $P\subset X$,  positive integer $k$  and a constant $\epsilon>0$.\\
        {\bf Output:} 
        A set $C$ ($|C| \le k$) and radius $r$ such that 
	 $P \subset \bigcup_{c \in C} B(c,r)$ and $r \le (2 + \epsilon) r^*.$
        \begin{algorithmic}[1]
            \State	Arbitrarily select a point $p_1$ from $P$ and set $C=\{p_1\}$.
            \While{$|C|< k$ } 
            \State $C=C\cup \left\{\mbox{AFN}\left(P,C,\frac{\epsilon}{5}\right)\right\}$ \quad\quad \quad \% $\mbox{FN}(P,C)$ is replaced by  $\mbox{AFN}(P,C,\epsilon/5)$.
            \EndWhile
            \State Set $r=\left(1+\frac{\epsilon}{5}\right)d(C,\mbox{AFN}(P,C,\frac{\epsilon}{5}))$
            \State Return $C,r$ as the solution.
        \end{algorithmic}
        \end{algorithm}	
     Gonzalaz's  original algorithm only returned $C,$ since in the deterministic case  $R = \mbox{FN}(P,C)$ could be calculated in $O(k r)$ time.

     \subsection{Proof of \Cref{correctness_of_greedy_algorithm}}

As noted this proof is just  a modification of the proof of correctness of 
 Gonzalez's \cite{gonzalez1985clustering}
 original algorithm (which used $\mbox{FN}(P,C)$ rather than $\mbox{AFN}(P,C,\epsilon/5)$).
 
    \begin{proof}
    (of \Cref{correctness_of_greedy_algorithm})

    Let $C=\{q_1,...,q_k\}$ and $r$ denote the solution returned by $GREEDY(P,\epsilon)$. 
    
    Let 
    $q=AFN(P,C,\frac{\epsilon}{5})$
    be the  
    $(1+\frac{\epsilon}{5})$-approximate
    furthest neighbor from $P$ to $C$ returned. Thus 
     $$\forall p\in P,\quad d(p,C)\leqslant \left(1+\frac{\epsilon}{5}\right)d(C,q)=r,$$ 
    i.e., $P\subseteq \bigcup_{i=1}^k B(c_i,r)$.  The output of the algorithm $GREEDY(P,\epsilon)$ is thus a feasible solution.
    
    Let $O=\{o_1,...,o_k\}$ denote an optimal $k$ center solution for the point set $P$ with  $r^*$ being the optimal radius value. Recall that $|S|$ is the number of points in set $S$. We consider two cases:
    
    \begin{enumerate}[\text{Case} 1]
        \item $\forall 1\leqslant i\leqslant k$, $|C\cap B(o_i,r^*)|=1$\\
        
        Fix $ p \in P.$  Let $o_i$ be such that 
        $p \in B(o_i,r^*).$
        Now let $q_j$ satisfy
        $q_j \in C \cap B(o_i,r^*)$.
        
        Then, by the triangle inequality, $d(p,q_j)\leqslant d(p,o_i)+d(o_i,q_j)\leqslant 2 r^*$.
        
        We have just shown that $\forall p\in P,$  $d(p,C)\leqslant 2r^*$.  In particular, 
       $$d(C,q)\leqslant \max_{p\in P} d(p,C)\leqslant 2r^*.$$ 
        Therefore, 
        $$r=\left(1+\frac{\epsilon}{5}\right)d(C,q)
        \leqslant
        \left(1+\frac{\epsilon}{5}\right) 2r^*
        < (2+\epsilon)r^*.$$

        \item There exists $o'\in O$ such that $|C\cap B(o',r^*)|\geqslant 2$.\\     
       Let  $q_i$ be  the $i$th point added into $C$ and  $C_i=\{q_1,...,q_i\}$. Thus $C_1\subset C_2\subset \cdots \subset C_k=C$. In Case 2, $C\cap B(o',r^*)$
       contains at least two points $q_i$ and $q_j$ ($i<j$). From line 3 in $GREEDY(P,\epsilon)$, we have  
       $q_j=AFN\left(P,C_{j-1},\frac{\epsilon}{5}\right)$.
       Furthermore, 
        $$
        \begin{aligned}
            \max_{p\in P} d(p,C)&\leqslant \max_{p\in P} d(p,C_{j-1}) \quad\quad \% \text{(Because $C_{j-1}\subseteq C$)}\\
                  &\leqslant 
                  \left(1+\frac{\epsilon}{5}\right)d(C_{j-1}, q_j)\\ 
                  & \leqslant \left(1+\frac{\epsilon}{5}\right)d(q_i,q_j)\quad\quad \% \text{(Because $q_i\in C_{j-1}$)}\\
                  & \leqslant \left(1+\frac{\epsilon}{5}\right) (d(q_i,o')+d(o',q_j)) \\
                  & \leqslant \left(1+\frac{\epsilon}{5}\right) (r^*+r^*)= \left(2+\frac{2\epsilon}{5}\right)r^*. \quad\quad  (*)
         \end{aligned}
        $$
        Then, consider the radius returned by $GREEDY(P,\epsilon)$:
        $$        
        \begin{aligned}
            r
            &=
             \left(1+\frac{\epsilon}{5}\right)d(C,q)\\
            &\leqslant 
            \left(1+\frac{\epsilon}{5}\right) \max_{p\in P} d(p,C)\\
            & \leqslant 
            \left(1+\frac{\epsilon}{5}\right)\left(2+\frac{2\epsilon}{5}\right)r^*   \quad\quad \% \text{(From ($*$))}\\
            & \leqslant
            \left( 2 + \frac {4 \epsilon}{5} + \frac {2 \epsilon^2} {25}\right) \leqslant  (2+\epsilon)r^*
        \end{aligned}
        $$
    \end{enumerate}
    where the last inequality assumes, without loss of generality, that $\epsilon \le 1.$
    
   Thus, $\forall p\in P$, $P\subseteq \bigcup_{i=1}^k B(c_i,R)$ and, in both cases $r\leqslant (2+\epsilon)r^*$. Thus, $GREEDY(P,\epsilon)$ always computes a $(2+\epsilon)$-approximation solution.
    \end{proof}

\section{Missing Details Associated with the  Modified Kim and Schwarzwald\cite{kim1}'s algorithm}

\subsection{Proof of \Cref {lem:algonemainlemma}}
\label{App:KSproof}

As noted previously, this proof is a modification of the proof  of correctness given by Kim and Schwarzwald \cite{kim1} for their algorithm (which used $\mbox{FN}(P,q)$ rather than $\mbox{AFN}(P,q,\epsilon/3)$). 

The proof needs  an important geometric observation due to 
   Kim and Schwarzwald\cite{kim1}  (slightly rephrased here). This was an extension of an earlier observation  by Kim and Ahn\cite{kim2015improved} that was used  to design a streaming algorithm for the Euclidean 2-center problem.

    \begin{lemma}\cite{kim1}\label{observation}
    (See Figure \ref{fig:observation}.) Fix $D\ge 2.$ Let  $B$ and $B'$
    be two $D$-dimensional balls with radii $r$ and $r',$ $r > r',$ around the same center point $c$. Let $p\in \partial B$ and $p'\in \partial B'$ with $d(p,p')=l\geqslant r$. 
    
    Define $B''$ to be the $D$-dimensional ball centered at  $c$ that is tangential to $pp'$. Denote that tangent  point as $m$ and define the distances $l_1=d(p',m)$  and $l_2=d(p,m)$.  Note that $l=l_1+l_2$. 
    
    Consider any line segment $p_1p_2$ satisfying $d(p_1,p_2)>l$, $p_1\in B'$ and $p_2\in B$. Then any point $m^*$ on $p_1p_2$ with $d(m^*,p_1)\geqslant l_1$ and $d(m^*,p_2)\geqslant l_2$ lies inside $B''$.
    \end{lemma}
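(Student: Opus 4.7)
The plan is to reduce the statement to a two-dimensional computation and then verify the desired bound by an elementary algebraic inequality. All of the hypotheses ($p_1 \in B'$, $p_2 \in B$, $d(p_1,p_2) > l$) and the conclusion ($m^* \in B''$) involve only concentric balls around $c$ and the segment $p_1p_2$; moreover $l = l_1 + l_2$ is determined by $r, r', r''$ via $l_1 = \sqrt{(r')^2 - (r'')^2}$ and $l_2 = \sqrt{r^2 - (r'')^2}$ (since $m$ is the foot of the perpendicular from $c$ onto the line $pp'$, and the equation $l = l_1 + l_2$ places $m$ between $p$ and $p'$). I would therefore work in an affine $2$-plane $\Pi$ through $c, p_1, p_2$, inside which $B, B', B''$ restrict to concentric disks of radii $r, r', r''$, reducing the statement to a planar one.

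In $\Pi$, I would place $c$ at the origin with the line through $p_1, p_2$ horizontal at height $h$ (the perpendicular distance from $c$ to that line) and write $p_1 = (x_1, h)$, $p_2 = (x_2, h)$ with $x_1 < x_2$ and $L = x_2 - x_1$. The hypotheses then translate to $x_1^2 \le (r')^2 - h^2$ and $x_2^2 \le r^2 - h^2$, and the point $m^* = (x_1 + a, h)$ with $a \in [l_1, L - l_2]$ lies in $B''$ precisely when $(x_1+a)^2 \le (r'')^2 - h^2$. For this to be achievable I first need $h \le r''$; this follows from the hypothesis $L > l$, since
\[
L \le \sqrt{r^2 - h^2} + \sqrt{(r')^2 - h^2}, \qquad l = \sqrt{r^2 - (r'')^2} + \sqrt{(r')^2 - (r'')^2},
\]
and the map $t \mapsto \sqrt{r^2 - t^2} + \sqrt{(r')^2 - t^2}$ is strictly decreasing on $[0, r']$.

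It remains to bound $x_1 + a$. As $a$ ranges over $[l_1, L-l_2]$, the value $x_1 + a$ lies in $[x_1 + l_1,\, x_2 - l_2]$; the extremal cases for $|x_1 + a|$ occur at $x_1 = -\sqrt{(r')^2 - h^2}$ and $x_2 = +\sqrt{r^2 - h^2}$. The two resulting inequalities
\[
\sqrt{(r')^2 - h^2} - \sqrt{(r')^2 - (r'')^2} \le \sqrt{(r'')^2 - h^2}, \qquad \sqrt{r^2 - h^2} - \sqrt{r^2 - (r'')^2} \le \sqrt{(r'')^2 - h^2},
\]
are both instances of the subadditivity $\sqrt{A} \le \sqrt{B} + \sqrt{A - B}$ for $0 \le B \le A$, immediate after squaring. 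The main obstacle is purely organizational bookkeeping: fixing coordinates so that $m$ lies between $p$ and $p'$, and handling the signs of $x_1, x_2$ consistently when passing to the extremes. Once this is set up, the entire argument reduces to the subadditivity of $\sqrt{\cdot}$.
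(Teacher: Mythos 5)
Your proof is correct, but note that the paper itself does not prove this lemma at all: it is imported verbatim from Kim and Schwarzwald \cite{kim1} (even the figure is copied), so there is no in-paper argument to compare against. Your coordinate reduction is a clean, self-contained verification. The two load-bearing steps both check out: (i) writing $l_1=\sqrt{(r')^2-(r'')^2}$, $l_2=\sqrt{r^2-(r'')^2}$ and using strict monotonicity of $t\mapsto\sqrt{r^2-t^2}+\sqrt{(r')^2-t^2}$ together with $L\le\sqrt{r^2-h^2}+\sqrt{(r')^2-h^2}$ and $L>l$ to get $h<r''$; and (ii) the chain $-\sqrt{(r'')^2-h^2}\le x_1+l_1\le x_1+a\le x_2-l_2\le\sqrt{(r'')^2-h^2}$, where the outer inequalities are exactly the subadditivity instances you state with $A=(r')^2-h^2$, $B=(r')^2-(r'')^2$ and $A=r^2-h^2$, $B=r^2-(r'')^2$. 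In fact the sign bookkeeping you worry about at the end is a non-issue: the only facts you use about $x_1$ and $x_2$ are $x_1\ge-\sqrt{(r')^2-h^2}$ and $x_2\le\sqrt{r^2-h^2}$, which hold regardless of sign, so no case analysis is needed. The one hypothesis you lean on implicitly is that $m$ lies between $p$ and $p'$ so that $l=l_1+l_2$ with $l_1,l_2$ given by those Pythagorean formulas; since the lemma statement stipulates $l=l_1+l_2$, this is legitimate.
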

    
    \begin{figure}[ht]
        \centering
        \includegraphics[scale=0.5]{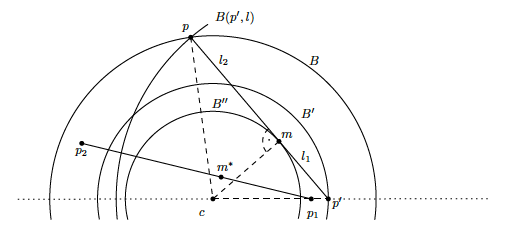}
        \caption{Illustration of lemma \ref{observation}. This figure is copied from  \cite{kim1}}
        \label{fig:observation}
    \end{figure}

    Lemma \ref{observation} will imply  that if 
     case (2) $d(m_i,p_{i+1})> (1+\epsilon/3)r^*$ always occurs, then  the distance from $c^*$ to $m_i$ is bounded. 
    
    \begin{corollary}
        If,\  $ \forall 1\leqslant i\leqslant j$, case (2) holds, i.e., $d(m_i,p_{i+1})> (1+\epsilon/3)r^*$, then  $d(c^*, m_{j+1})\leqslant \delta_{j+1}\cdot r^*$ and 
        $\delta_{j+1} <\delta_j.$
        \label{corollary}
    \end{corollary}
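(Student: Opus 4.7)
The plan is to prove the Corollary by downward induction on $j$, using Lemma \ref{observation} as the main geometric tool to translate the algorithm's algebraic update formulas into a statement about containment in a shrinking ball around $c^*$. For the base case ($j=0$, so we bound $m_1$), we have $\delta_1 = 1$ and $m_1 = p_1 \in P \subseteq B(c^*, r^*)$, giving $d(c^*, m_1) \le r^* = \delta_1 \cdot r^*$ trivially.

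For the inductive step, assume $d(c^*, m_j) \le \delta_j r^*$ and that case (2) holds, i.e., $d(m_j, p_{j+1}) > (1+\epsilon/3)r^*$. I would apply Lemma \ref{observation} with $c = c^*$, $B = B(c^*, r^*)$, $B' = B(c^*, \delta_j r^*)$, and reference distance $l = (1+\epsilon/3)r^*$, where auxiliary boundary points $p \in \partial B$ and $p' \in \partial B'$ are chosen so that $d(p,p') = l$. Applying the Pythagorean theorem to the triangle $c^* p p'$ gives $l_2 = \frac{1 + (1+\epsilon/3)^2 - \delta_j^2}{2(1+\epsilon/3)} r^*$ and $l_1 = l - l_2 = \frac{\delta_j^2 + (1+\epsilon/3)^2 - 1}{2(1+\epsilon/3)} r^*$, and the tangent ball $B''$ has radius $h = \sqrt{r^{*2} - l_2^2}$. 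Comparing with the recurrence on line 10 of \Cref{ALg:MEB} shows $h = \delta_{j+1} r^*$ exactly.

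It remains to verify that the point $m_{j+1}$ chosen on line 9 satisfies the hypotheses of the lemma with $p_1 = m_j$ and $p_2 = p_{j+1}$. By the inductive hypothesis $p_1 \in B'$, while $p_2 \in B$ since $p_{j+1} \in P$; and $d(p_1, p_2) > l$ directly from case (2). The algorithm's step fraction $\alpha = \frac{\delta_j^2 + (1+\epsilon/3)^2 - 1}{2(1+\epsilon/3)^2}$ equals $l_1/l$, so $d(m_{j+1}, m_j) = \alpha \cdot d(m_j, p_{j+1}) > (l_1/l) \cdot l = l_1$, and symmetrically $d(m_{j+1}, p_{j+1}) > l_2$. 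Lemma \ref{observation} then places $m_{j+1}$ inside $B''$, yielding $d(c^*, m_{j+1}) \le \delta_{j+1} r^*$. For the strict inequality $\delta_{j+1} < \delta_j$, note that the perpendicular distance $h$ from $c^*$ to the line $pp'$ is strictly smaller than $|c^* p'| = \delta_j r^*$ whenever the foot of perpendicular $m$ differs from $p'$, i.e., whenever $l_1 > 0$; and $l_1 > 0$ holds unconditionally here because $(1+\epsilon/3)^2 > 1$ forces the numerator to be positive. Dividing by $r^*$ gives $\delta_{j+1} < \delta_j$. The main obstacle is the bookkeeping: ensuring that the algebraic expressions for $l_1, l_2, h$ coming out of the perpendicular-foot calculation match the algorithm's precise choice of step fraction and the $\delta$-recurrence, but this is routine algebra rather than a conceptual difficulty once Lemma \ref{observation} is in hand.
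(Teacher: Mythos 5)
Your proposal follows essentially the same route as the paper's own proof: induct on $j$, set $B=B(c^*,r^*)$ and $B'=B(c^*,\delta_j r^*)$, build the auxiliary chord $pp'$ of length $(1+\epsilon/3)r^*$, compute $l_1,l_2$ and the tangent radius via the perpendicular foot, match them to the step fraction and $\delta$-recurrence in \Cref{ALg:MEB}, and invoke Lemma \ref{observation} with $p_1=m_j$, $p_2=p_{j+1}$, $m^*=m_{j+1}$. The one point you gloss over is the existence of the points $p\in\partial B$, $p'\in\partial B'$ with $d(p,p')=(1+\epsilon/3)r^*$: such a chord exists only if $\delta_j\geqslant \epsilon/3$, and the paper justifies this explicitly by noting that otherwise $d(m_j,p_{j+1})\leqslant d(c^*,m_j)+r^*\leqslant(\delta_j+1)r^*\leqslant(1+\epsilon/3)r^*$, contradicting the Case (2) hypothesis at step $j$. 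Adding that one observation closes the only gap; the rest (including your argument that $l_1>0$ forces $\delta_{j+1}<\delta_j$) is sound.
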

    
    \begin{proof}
        The proof will be by induction. In  the base case, $\delta_1=1$ and $m_1$ is arbitrarily selected in $P$, so $d(c^*, m_1)\leqslant \max_{x\in P}d(c^*,x)=r^*=\delta_1\cdot r^*$.

        Now suppose that,\  $ \forall 1\leqslant i\leqslant j$,  $d(m_i,p_{i+1})> (1+\epsilon/3)r^*$ holds.  From the induction hypothesis, we  assume that $d(c^*, m_j)\leqslant \delta_j\cdot r^*$ and
        $\delta_{j+1} < \delta_j \le 1.$
        
        
        Set $B=B(c^*,r^*)$, and $B'=B(c^*, \delta_j r^*)$.

        
        
        Next, arbitrarily select a point $p$ on the boundary of $B(c^*,r^*).$
        Now construct  ball $\Bar{B}=B(p,(1+\epsilon/3)r^*)$. From the induction hypothesis,
        $$d(m_j,p_{j+1})\leqslant d(c^*,m_j)+r^*\leqslant \delta_jr^*+r^*.$$
        If $\delta_j r^* \leq (\epsilon/3)r^*,$ this would imply
        $$d(m_j,p_{j+1})\leqslant(1+\epsilon/3)r^*,$$
        contradicting that this is Case 2.  Thus
        $\delta_j r^*> (\epsilon/3)r^* $.
        
        Since, $\delta_j < 1,$ this implies that  $\Bar{B}$ must intersect $B'$. Arbitrarily select one of the  two intersection points as $p'$.  Next set
     $l = d(p,p')=(1+\epsilon/3)r^*.$
     
     Finally, define $B''$ to be the ball centered at  $c^*$ that is tangent to line segment $pp'$. Denote that tangent  point as $m$ and define the distances $l_1=d(p',m)$  and $l_2=d(p,m)$. We will show that $B''=B(c^*,\delta_{j+1}r^*)$, i.e., that 
     $d(c^*,m)=\delta_{j+1}r^*$, where $\delta_{j+1}$ is as defined on line 9 of Algorithm \ref{ALg:MEB}.

        
        
        Note that $l=l_1+l_2$. Thus, $l_1$ can be computed by the equation (illustrated in figure \ref{fig:l_1}). 
        \begin{figure}[ht]
            \centering
            \includegraphics[scale=0.2]{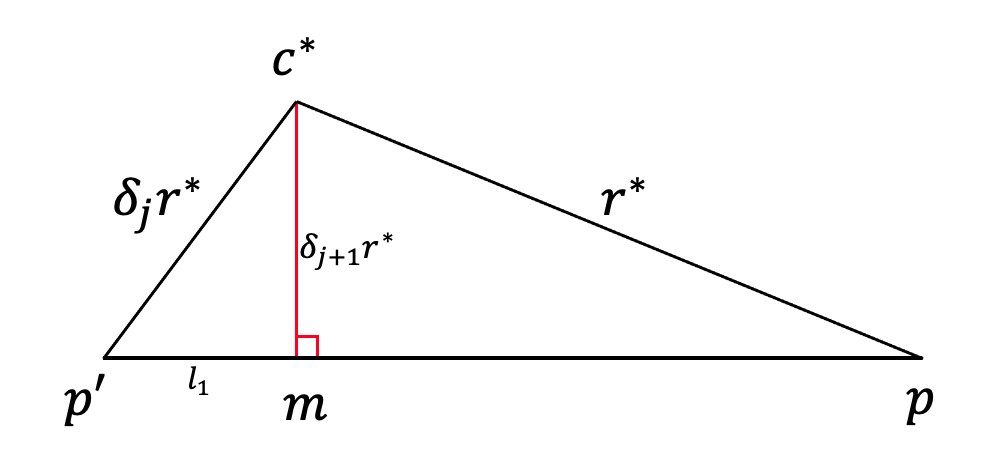}
            \caption{
            $l=d(p,p')=(1+\epsilon/3)r^*$, $d(p',m)=l_1$ and $d(m,p)=l-l_1$.}
            \label{fig:l_1}
        \end{figure}
        $$(\delta_j r^*)^2-l_1^2=(r^*)^2-((1+\epsilon/3)r^*-l_1)^2$$

        This solves to  $l_1= \frac{\delta_j^2+(1+\epsilon/3)^2-1}{2(1+\epsilon/3)}\cdot r^*.$ 
        Thus
        $$\delta_{j+1}=\sqrt{1-\left(\frac{1+(1+\epsilon/3)^2-\delta^2_j}{2(1+\epsilon/3)}\right)^2}$$
        as required and, by construction, $\delta_{j+1}< \delta_j.$  Furthermore,
        $$\frac{\delta_i^2+(1+\epsilon/3)^2-1}{2(1+\epsilon/3)^2} = \frac {l_1} l.$$
        Plugging  into line 8 of Algorithm \ref{ALg:MEB} yields  $$m_{j+1}=m_j+(p_{j+1}-m_j)\cdot \frac{l_1}{l}
        =m_j+(p_{j+1}-m_j)\cdot \left(1-\frac{l_2}{l}\right)
        .$$

        Since $d(m_j,p_{j+1})> (1+\epsilon/3)r^*$,  we have $$d(m_j,m_{j+1})=\frac{l_1}{l}\cdot d(m_j,p_{j+1})> \frac{l_1}{(1+\epsilon/3)r^*}(1+\epsilon/3)r^*=l_1$$ 
        and $d(p_{j+1},m_{j+1})=\frac{l_2}{l}\cdot d(m_j,p_{j+1})> l_2$. 
        
        From the induction hypothesis,  $d(c^*, m_j)\leqslant \delta_j\cdot r^*$, so $m_j\in B(c^*, \delta_j r^*)$. The  definition of $c^*,r^*,$  further implies $p_{j+1}\in B(c^*,r^*)$. 
        
        We now apply Lemma  \ref{observation}, with $p_1 = m_j$, $p_2 = p_{j+1}$ and $m^*=m_{j+1}.$ Since these three points are collinear, Lemma \ref{observation} implies that $m^* \in B'',$ i.e., that 
        $$d(c^*,m_{j+1})\leqslant \delta_{j+1}\cdot r^*.$$
    \end{proof}
    
    We can now 
    prove  \Cref{lem:algonemainlemma}. We again note that this is just a slight modification of the proof given by Kim and Schwarzwald \cite{kim1}
for their original algorithm. 

   





    \begin{proof} (of  \Cref{lem:algonemainlemma}.)

Recall that the goal is to prove that
if,\  $ \forall 1\leqslant i\leqslant j$, case (2) holds, i.e., $d(m_i,p_{i+1})> (1+\epsilon/3)r^*$, then
    $j \le \frac{6}{\epsilon}-1.$

    Consider the  triangle $\bigtriangleup p p' c^*$ 
    (Figure \ref{fig:l_1}) constructed in the proof of Corollary \ref{corollary}.
    
    Let $p(i), p'(i), m(i)$ denote $p,p',m$ in step $i$ of the algorithm.
    
    Recall that in the construction, $p(i)$ is on the boundary of $B(c^*,r^*)$ and $p'(i)$ is on the boundary of $B'=B(c^*,\delta_i r^*)$. Additionally, $d(p(i),p'(i))=(1+\epsilon/3)r^*$ and line segment $c^*m$ is vertical to line segment $p(i)p'(i).$  
    Recall that  $$d(p'(i),m(i))=l_1=\frac{\delta_i^2+(1+\epsilon/3)^2-1}{2(1+\epsilon/3)}r^*
    \quad\mbox{and}\quad d(p(i),m(i))=(1-\epsilon/3)r^*-l_1.$$ 
    
    Now define $\beta_i,\alpha_i$ so that
    $$d(m(i),p(i))=\beta_i \cdot d(p(i),p'(i))=\beta_i(1+\epsilon/3)r^*
    \quad\mbox{and}\quad(p'(i),m(i))=\alpha_i (1+\epsilon/3)r^*.$$ 
    Note that $\alpha_i + \beta_i=1.$
    

    Since $\bigtriangleup m(i)p(i) c^*$ is a right triangle, $d(m(i),p(i))=\beta_i(1+\epsilon/3)r^* \leqslant d(c^*,p(i))= r^*$ so, $\forall i,$   $\beta_i \le \frac{1}{(1+\epsilon/3)}$.

    We have therefore just proven that 
    if,\  $ \forall 1\leqslant i\leqslant j$, case (2) holds, i.e., $d(m_i,p_{i+1})> (1+\epsilon/3)r^*$, then for all such $i,$ 
     $\beta_i \le \frac{1}{(1+\epsilon/3)}$ and, in particular, 
     $\beta_j \le \frac{1}{(1+\epsilon/3)}.$
    
    


         
            By construction,  $(\delta_i r^*)^2=(r^*)^2-(\beta_{i}(1+\epsilon/3)r^*)^2$ and $(\alpha_{i}(1+\epsilon/3)r^*)^2=(\delta_{i-1} r^*)^2-(\delta_i r^*)^2$. 
            Plugging the first (twice) into the second yields
            $$(\alpha_{i}(1+\epsilon/3)r^*)^2=\left((r^*)^2-(\beta_{i-1}(1+\epsilon/3)r^*)^2\right)
            -\left((r^*)^2-(\beta_{i}(1+\epsilon/3)r^*)^2\right),
            $$
            or $\beta_i^2=\alpha_i^2+\beta_{i-1}^2.$
        
        Combining  this with
         $\beta_i=1-\alpha_i$
       yields 
         $\beta_i=\frac{1+\beta^2_{i-1}}{2}$. Set $\varphi_i=\frac{1}{1-\beta_i}$. Then
        $$\varphi_i=\frac{1}{1-\beta_i}=\frac{1}{1-\frac{1+\beta_{i-1}^2}{2}}=\frac{1}{\frac{1-\beta_{i-1}^2}{2}}=\frac{\frac{1}{1-\beta_{i-1}}}{\frac{1+\beta_{i-1}}{2}}=\frac{\varphi_{i-1}}{\frac{1+(1-\frac{1}{\varphi_{i-1}})}{2}}=\frac{\varphi_{i-1}}{1-\frac{1}{2\varphi_{i-1}}}.$$
        Thus
        $$\varphi_i=\frac{\varphi_{i-1}}{1-\frac{1}{2\varphi_{i-1}}}=\varphi_{i-1}(1+\frac{1}{2\varphi_{i-1}}+\frac{1}{(2\varphi_{i-1})^2}+\cdots)\geqslant \varphi_{i-1}+\frac{1}{2}.$$

       
      Recall that $\delta_1=1$, $d(c^*,p(1))=d(c^*,p'(1))=r^*$, i.e., $\bigtriangleup m(i)p(i) c^*$ is an isosceles triangle. Therefore, 
       $\alpha_1=\beta_1=\frac{1}{2},$  and  $\varphi_1=2.$ Iterating the equation above yields $\varphi_i\geqslant 2+\frac{i-1}{2}$. Thus, $\beta_i\geqslant 1-\frac{2}{3+i}$. Thus, if $j > \frac{6}{\epsilon}-1$, then $\beta_j> \frac{1}{1+\epsilon/3},$
        which we previously saw was not possible.
        
    \end{proof}

\subsection{The Actual Modified Algorithm for  Euclidean $k$ Center }\label{App:KimAlgo}
    	
	\begin{algorithm}[h!]
	\caption{Modified $k$-center($P,\epsilon,k$)} 
	\label{ALg:kcenter}
	{\bf Input:} 
	A set of points $P$, positive integer $k$ and a constant $\epsilon>0$.\\
	{\bf Output:} A set $\bar C$ ($|\bar C| \le k$) and radius $\bar r$ such that 
	 $P \subset \bigcup_{c \in \bar C} B(c,r)$ and $\bar r \le (1 + \epsilon) r^*.$\\ 
	In the algorithm, each $m_{j,i}$, $j \in \{1,\ldots,k\},$ is either  undefined or a point in $\mathbb{R}^D.$
	$M_i$ denotes the set of defined $m_{j,i}.$ 
	$\mathcal{F}$ is the set of all functions from 
	$\{1,\ldots, k\lfloor 6/\epsilon \rfloor\}$ to 
	$\{1,\ldots,k\}.$
	\begin{algorithmic}[1]
    \State{$\bar r=\infty$}
	\For{every function $f \in \mathcal{F}$}
	    \State {$\forall j\in\{1\ldots k\},$ set 
	    $\delta_{j,1}=1;$ Set $r=\infty;$}
	    \State Arbitrarily select a point $p_1$ from $P$;
	    \State $m_{j,1} = \begin{cases}
	    \mbox{undefined} & \mbox{if $j \not=f(1)$}\\
	    p_1 & \mbox{if $j= f(1)$}
	    \end{cases}$
	    \For{$i=1$ to $k\lfloor 6/\epsilon \rfloor$}
	    \State $p_{i+1}=$AFN$(P,M_i,\epsilon/3)$;
	    \State $r_i=\left(1 + \frac \epsilon 3\right)d(M_i,p_{i+1});$
	    \If{$r_i< r$ }
	    \State $C=M_i$;  $r=r_i$;
	    \EndIf
	       \State $m_{j,i+1} = \begin{cases}
	    m_{j,i} & \mbox{if $j \not=f(i)$}\\
	    m_{j,i}+(p_{i+1}-m_{j,i})\cdot \frac{\delta_{j,i}^2+(1+\epsilon/3)^2-1} {2(1+\epsilon/3)^2}& \mbox{if $j= f(i)$}
	    \end{cases}
	    $
	       \State $\delta_{j,i+1} = \begin{cases}
	    \delta_{j,i} & \mbox{if $j \not=f(i)$}\\
	    \sqrt{1-\left(\frac{1+(1+\epsilon/3)^2-\delta^2_{j,i}}{2(1+\epsilon/3)}\right)^2}& \mbox{if $j= f(i)$}
	    \end{cases}
	    $
	    \EndFor
	    \If{$ r <  \bar r$ }
	        \State{ $\bar r =r;$ $\bar C = C;$ }
	    \EndIf
	    \EndFor
	\end{algorithmic}
    \end{algorithm}

    Before starting, we provide a brief  intuition.  If, for each $p \in P$, the algorithm knew in advance  in which of the $k$ clusters $p$ is located, it could solve the problem by running \Cref {ALg:MEB}  separately for each cluster and returning the largest radius it found. Since it doesn't know that information in advance, it ``guesses'' the location.  This guess is encoded by the function $f \in {\mathcal F}$ introduced in \Cref{ALg:kcenter}. It runs this procedure for every possible guess.  Since one of the guesses must be correct, the algorithm  returns a correct answer.

    Again, we emphasize that \Cref{ALg:kcenter}
    is essentially the algorithm alluded\footnote{We write ``alluded to'' because \cite{kim1} do not actually provide details. They only say that they are utilizing the guessing technique from \cite{badoiu2003smaller}. In our algorithm, we have provided full details of how this can be done.  } to in Kim and Schwarzwald \cite{kim1} with calls to 
  $\mbox{FN}(P,C)$ replaced by calls to $\mbox{AFN}(P,C,\epsilon/3).$
    \begin{theorem}\label{thm:algframe}
    Let $P \subset \mathbb{R}^D$  be a finite set of points. Suppose $\mbox{AFN}(P,C,\epsilon)$ can be implemented in $T(|C|, \epsilon)$ time.
     Then an  $(1+\epsilon)$-approximate $k$-center solution for $P$ can be constructed in  $O\left(DT\left(k, \frac \epsilon 3\right)2^{O(k\log k/\epsilon)}\right)$ time.
    \end{theorem}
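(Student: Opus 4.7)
The plan is to adapt the analysis of \Cref{correctness_of_algorithm} by viewing \Cref{ALg:kcenter} as $k$ parallel copies of \Cref{ALg:MEB}, where the guess function $f$ selects which copy to advance at each iteration. Since every $f\in\mathcal{F}$ is tried, correctness reduces to showing that at least one ``correct'' guess $f^*$ makes the algorithm output a radius at most $(1+\epsilon)r^*$. Feasibility (that $P$ is covered by the returned balls) is automatic: in every iteration $i$, $r_i=(1+\epsilon/3)d(M_i,p_{i+1})$ where $p_{i+1}$ is a $(1+\epsilon/3)$-AFN of $M_i$, hence $P\subseteq\bigcup_{c\in M_i}B(c,r_i)$, and the bookkeeping in the algorithm guarantees the finally returned $(\bar C,\bar r)$ satisfies $P\subseteq\bigcup_{c\in\bar C}B(c,\bar r)$.

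Let $O=\{o_1,\ldots,o_k\}$ be an optimal $k$-center solution with common radius $r^*$ and write $B_j^*=B(o_j,r^*)$. I would define the ``correct'' guess $f^*$ by running the algorithm once: at iteration $i$, after $M_i$ and $p_{i+1}$ have been produced, set $f^*(i)$ to any index $j$ with $p_{i+1}\in B_j^*$ (one exists because $P\subseteq\bigcup_j B_j^*$). This $f^*$ appears in the enumeration over $\mathcal{F}$. The analysis then splits into two cases mirroring \Cref{correctness_of_algorithm}: either (a) some iteration $i$ satisfies $d(M_i,p_{i+1})\leqslant(1+\epsilon/3)r^*$, in which case $r_i\leqslant(1+\epsilon/3)^2 r^*\leqslant(1+\epsilon)r^*$ (using $\epsilon\leqslant 1$); or (b) every iteration falls into the Case~2 analogue $d(M_i,p_{i+1})>(1+\epsilon/3)r^*$, which I will rule out by contradiction.

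For case (b), I would partition the $k\lfloor 6/\epsilon\rfloor$ iterations by the value of $f^*$; by pigeonhole some cluster $j$ receives at least $\lfloor 6/\epsilon\rfloor$ updates. Restrict attention to this subsequence. Since $m_{j,i}\in M_i$ we get $d(m_{j,i},p_{i+1})\geqslant d(M_i,p_{i+1})>(1+\epsilon/3)r^*$, and by construction of $f^*$ the point $p_{i+1}$ lies in $B_j^*=B(o_j,r^*)$. These are precisely the hypotheses required to rerun the proof of \Cref{lem:algonemainlemma} for the single-cluster MEB-like sequence $m_{j,1},m_{j,2},\ldots$ against optimal center $o_j$, because the geometric progress argument built on \Cref{observation} only uses that $p_{i+1}\in B(o_j,r^*)$ and $d(m_{j,i},p_{i+1})>(1+\epsilon/3)r^*$. \Cref{lem:algonemainlemma} then bounds the number of such iterations by $6/\epsilon-1<\lfloor 6/\epsilon\rfloor$, a contradiction. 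Hence case (a) must occur on the run with $f=f^*$, so the algorithm records some $r\leqslant(1+\epsilon)r^*$, and $\bar r\leqslant(1+\epsilon)r^*$.

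The running time follows by direct bookkeeping: $|\mathcal{F}|=k^{k\lfloor 6/\epsilon\rfloor}=2^{O(k\log k/\epsilon)}$, and for each $f$ the inner loop runs $k\lfloor 6/\epsilon\rfloor$ iterations, each dominated by one $\mbox{AFN}(P,M_i,\epsilon/3)$ call costing $T(k,\epsilon/3)$ plus an $O(D)$ coordinate update of $m_{f(i),i+1}$. The $k/\epsilon$ iteration factor is absorbed into $2^{O(k\log k/\epsilon)}$, yielding the claimed $O\!\left(D\cdot T(k,\epsilon/3)\cdot 2^{O(k\log k/\epsilon)}\right)$ bound. The main obstacle I anticipate is verifying that the single-cluster progress argument from \Cref{lem:algonemainlemma} really survives substituting $\mbox{FN}(P,m_{j,i})$ by $\mbox{AFN}(P,M_i,\epsilon/3)$: using $M_i$ rather than $\{m_{j,i}\}$ is exactly why the guess of optimal cluster membership is needed, and it is the step where the $2^{O(k\log k/\epsilon)}$ blow-up becomes unavoidable.
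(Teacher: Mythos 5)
Your proposal is correct and follows essentially the same route as the paper's proof: the adaptively defined guess function (the paper's $f'$), the pigeonhole argument producing a cluster that receives at least $\lfloor 6/\epsilon\rfloor$ updates, the reuse of the single-cluster analysis of \Cref{lem:algonemainlemma}, and the same running-time accounting. Your explicit remark that the geometric progress argument only needs $p_{i+1}\in B(o_j,r^*)$ and $d(m_{j,i},p_{i+1})>(1+\epsilon/3)r^*$ is precisely the point the paper relies on when it says the run with the correct guess behaves as $k$ separate copies of the MEB algorithm.
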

    \begin{proof}
     Let $C^*=\{c_1^*,\ldots,c^*_k\}$ be a set of optimal centers and $r^*= d(C^*,P).$ Partition the points in $P$ into $P_i,$ $i=1,\ldots,k,$ so that
     $P_i \subseteq P \cap B(c^*_i,r^*).$ Let $r^*_i=d(c^*_i,P_i),$ i.e., $B(c^*_i, r^*_i)$ is a minimum enclosing ball for $P_i.$  Note that 
     $r^* = \max_i r^*_i.$
     
     Fix $f: \{1,\ldots,k\lfloor 6/\epsilon \rfloor\} \rightarrow \{1,\ldots,k\}$ to be some arbitrary function.  Lines 3-12 of \Cref {ALg:kcenter}
     maintains a list
     of $k$ tentative centers; $m_{1,i},\ldots,m_{k,i},$  denotes the list  at the start of iteration $i.$
     Note that some of the $m_{j,i}$ might be undefined, i.e., do not exist. $M_i$ will denote the set of defined items in the list at the start of iteration $i.$ During iteration $i$, the list updates (only the) tentative center  $m_{f(i),i}$ and  also constructs a radius $r_i.$ 
     
     The algorithm starts with all of the $m_{j,i}$ being undefined, chooses some arbitrary point of $P,$ calls it $p_1,$ and then sets $m_{f(1),1}=p_1.$
     
     At step $i,$ it sets $p_{i+1}=\mbox{AFN}\left(P,M_i,\epsilon/3\right)$ and $r_i = \left(1+ \frac \epsilon 3 \right) d(M_i,p_{i+1}).$
     
     Note that the definitions of 
     $\mbox{AFN}\left(P,M_i,\epsilon/3\right)$ and $r_i$  immediately imply
     \begin{equation}
     \label{eq:alg1k}
         \forall i,\quad P \subset \bigcup_{q \in M_i} B(q,r_i).
     \end{equation}
     Thus, lines 3-12  return $C$ and $r$ that cover all  points in $P$ in $O\left( D \frac k \epsilon T\left(|C|,\frac \epsilon 3\right)\right)$ time.

     So far, the analysis has not considered  lines 11-12 of the algorithm.
     
    The algorithm arbitrarily chooses $p_1.$ Now consider the unique function $ f'(i)$  that  always returns the index of the $P_j$ that contains $p_{i+1},$ i.e., $p_{i+1} \in P_{f'(i)}.$ 
    For this $f'(i),$ lines 11 and 12 of the algorithm work as if they are running the original modified MEB algorithm on each of the $P_j$ separately.

     By the generalized pigeonhole principle, there must exist at least one index $j$ such that $f'(i)=j$ at least $\lfloor 6/\epsilon\rfloor$ times. For such a $j,$ consider the value of $i$ for which $f'(i)=j$ exactly $\lfloor 6/\epsilon\rfloor$ times. Then, from the analysis of 
     \Cref{ALg:MEB}, for this particular $j,$
     $$ d(m_{j,i},p_{i+1})\le (1 +  \epsilon/3) r^*_j,$$
   so
     $$d(M_i,p_{i+1})\le d(m_{j,i},p_{i+1})\le (1 +  \epsilon/3) r^*_j\le
     (1 +  \epsilon/3) r^*.$$
    
     Thus
     $$  r_i=\left(1 + \frac \epsilon 3\right)d(M,p_{i+1})\le
     \left(1 + \frac \epsilon 3\right)^2 r^*
     \le \left(1 + \epsilon \right) r^*.
     $$

In particular, lines 3-12 
using  $f'(i)$ returns a 
$(1+ \epsilon)$-approximate solution.

\Cref{ALg:kcenter} runs lines 3-12
on all 
$\Theta\left(
k^{k\lfloor 6/\epsilon \rfloor}
\right) = 2^{O(k \log k/\epsilon)}$
different possible functions $f(i).$ Since this includes $f'(i),$ the full algorithm also returns a $(1+ \epsilon)$-approximate solution.
    \end{proof}

\end{document}